\let\csname equation*\endcsname\relax
\let\csname endequation*\endcsname\relax
\newtheorem*{proposition}{Proposition}
\newtheorem{definition}{Definition}
\newtheorem*{theorem-non}{Theorem}
\newtheorem*{corollary-non}{Corollary}
\newtheorem*{summary-non}{Summary}
\def\ket#1{\mathinner{|{#1}\rangle}}
\def\Id{\operatorname{Id}}
\newcommand{\egale}[1]{%
\ensuremath{\stackrel{#1}{=}}}
\newcommand{\eegale}[2]{%
\ensuremath{\underset{#2}{\overset{#1}{=}}}}
\newcommand{\myLines}[1]{
\begin{picture}(4,1)
\put(0,0){\line(2,1){2}}
\put(2,0){\vector(0,1){0.7}}
\put(4,0){\line(-2,1){2}}
\end{picture}}
\let\qdrawReal=\qdraw@branches
\newcommand\brOverride{\let\qdraw@branches=\myLines}
\newcommand\brRestore{\let\qdraw@branches=\qdrawReal}
\newcommand{\coh}[2]{\mathinner{|{#1}\rangle}\!\!\mathinner{\langle{#2}|}}
\def\text#1{\textrm{#1}}
\def\id{\mathbbm{1}}
\def\eq{\begin{equation}}
\def\eeq{\end{equation}}
\newcommand{\theoremname}{{\bf{Theorem}}}
\newcommand{\sectionname}{Sec.~}
\begin{document}

\title[On the inequivalence of the CH and CHSH inequalities due to finite statistics]{On the inequivalence of the CH and CHSH inequalities due to finite statistics}


\author{M O Renou$^1$, D Rosset$^{1,2}$, A Martin$^1$ and N Gisin$^1$}
\address{1. Group of Applied Physics, University of Geneva, Switzerland\\2. National Cheng Kung University, Tainan, Taiwan}
\ead{marcolivier.renou@unige.ch}

\begin{abstract}
Different variants of a Bell inequality, such as CHSH and CH, are known to be equivalent when evaluated on nonsignaling outcome probability distributions. However, in experimental setups, the outcome probability distributions are estimated using a finite number of samples. Therefore the nonsignaling conditions are only approximately satisfied and the robustness of the violation depends on the chosen inequality variant. We explain that phenomenon using the decomposition of the space of outcome probability distributions under the action of the symmetry group of the scenario, and propose a method to optimize the statistical robustness of a Bell inequality. In the process, we describe the finite group composed of relabeling of parties, measurement settings and outcomes, and identify correspondences between the irreducible representations of this group and properties of outcome probability distributions such as normalization, signaling or having uniform marginals.
\end{abstract}

\pacs{03.65.Ud, 02.20.-a}

\noindent{\it Keywords\/}: Bell inequalities, Nonlocality, Device-independent quantum information, Symmetries.

\maketitle

\section{Introduction}\label{sec:intro}

The structure of physical systems can be probed by observing relations between the measurement outcomes of their subsystems. When the subsystems are spacelike separated, special relativity forbids faster-than-light communication; then the outcome probability distributions obey the linear nonsignaling constraints~{\cite{Cirelson1980,Popescu1994,Brunner2014}}. Additionally, when subsystems obey the locality principle, the outcome probability distributions respect linear constraints known as Bell inequalities~{\cite{Bell1964}}.  The CHSH~{\cite{Clauser1969}} and CH~{\cite{Clauser1974}} Bell inequalities were introduced by Clauser et al., along with experimental proposals to test the violation of the locality principle by quantum mechanics. Later on, the CHSH and the CH inequalities were shown to be equivalent~{\cite{Mermin1995,Cereceda2001}} provided the outcome probability distributions satisfy the nonsignaling constraints.

\begin{figure}
\includegraphics{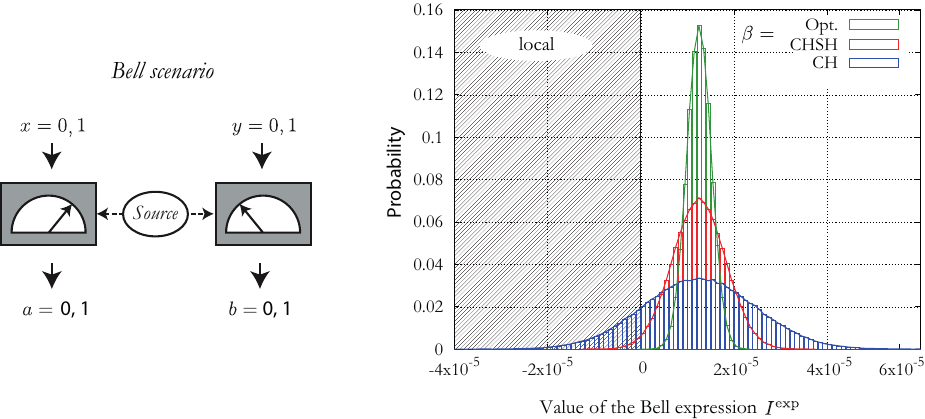}
\caption{\label{fig:example_intro} Simulated tests of variants of Bell inequalities modeled after the 2015 loophole-free experiment of Shalm et al.~\cite{Shalm2015}. The behavior of the system is represented by the distribution $p_{ab|xy}$, estimated using $P^\mathrm{exp}(ab|xy) = N(abxy) / N(xy)$. Due to finite sample effects, $\vec{P}^\mathrm{exp}$ violates the nonsignaling constraints, rendering CH and CHSH inequivalent (the inequalities have been rescaled to allow comparison, and Opt is the optimal variant obtained in Section~\ref{sec:optimal}). The histogram represents 200'000 runs of the experiment, each run including the same number of trials as in~\cite{Shalm2015}; while the average value of the Bell expression $I^\mathrm{exp}$ is the same, the variance differs significantly.
}
\end{figure}

Quantum systems respect the nonsignaling constraints; thus, in theory, the CHSH and CH inequalities should be equivalent. However, in practice, experimental probability distributions are estimated using a finite set of samples, and the nonsignaling constraints are only satisfied approximately due to statistical fluctuations. Thus, supposedly equivalent Bell inequalities exhibit differing statistical properties, as hinted by Knill et al.~\cite{Knill2015} and Gill~\cite{Gill2015}. In particular, the standard deviation of the reported Bell inequality violation is affected, as shown in Figure~\ref{fig:example_intro} for the CHSH and CH inequalities. With this observation, we can look for the optimal variant of a given Bell inequality. To do so, we decompose the inequality into a nonsignaling subspace (left untouched) and a signaling subspace to be optimized. While several ways to decompose a Bell inequality have already been proposed~\cite{Brunner2014,Collins2004,Rosset2014a}, we show the existence of a unique decomposition that satisfies the requirements of the device-independent framework~~{\cite{Acin2007,Bardyn2009,Pironio2009}.

Expanding on our earlier investigations~\cite{Rosset2014a}, our method rests on the study of the symmetries of Bell scenarios. In the device-independent approach, nonlocal behaviors are studied without attributing a particular meaning to the parties, the measurement settings and the outcomes, which are, for all purposes, abstract labels --- no assumptions are made about the inner workings of devices. Likewise, the set of Bell inequalities is defined up to the relabeling of parties, settings and outcomes~\cite{Avis2005,Avis2006}. Quite naturally, the symmetry group of those relabelings collects Bell inequalities into families~{\cite{Pitowsky1991,Fine1982,Werner2001,Masanes2003,Sliwa2003,Rosset2014a}}. However, in the current work, we assume that we already selected the representative of a Bell inequality to be violated in an experimental setup, and merely optimize its signaling subspace.

The required decomposition of the inequality coefficients into subspaces follows straightforwardly from the structure of the action of the relabeling group: for example, a nonsignaling probability distribution stays nonsignaling after an arbitrary relabeling. The same phenomenon holds for other properties, such as being properly normalized or having uniformly random marginals. These three properties correspond to invariant subspaces of the probability distribution coefficients; a complete list of these subspaces is given by the irreducible representations of the relabeling group.

The present paper has two goals. First, we present a practical method to optimize the Bell inequality variant used in an experiment. Second, we enumerate all the irreducible representations of the relabeling group present in outcome probability distributions and associate them to physical properties. Accordingly, we organize our paper as follows. In Section~\ref{sec:labelingequivprinciple}, we introduce the relabeling equivalence principle motivating our decomposition. We describe our optimization process in Section~\ref{sec:optimal} and illustrate it with simulations of two recent loophole-free Bell experiments. Together, these two sections form a self-contained description of our optimization method. In Section~\ref{sec:decompo}, we list all the invariant subspaces appearing in the outcome probability distributions of a bipartite scenario with binary settings and outcomes.

\subsection{Definitions used through out the present work}

A {\em Bell scenario} is described by the triple $(n,m,k)$, where $n$ is the number of parties, each with $m$ measurement settings and $k$ measurement outcomes per settings~\cite{AllTheBell}. We write $(x, y, ...)$ the measurement settings used by the parties, and $(a, b, ...)$ the obtained measurement outcomes. In a Bell scenario, a {\em setup} is a concrete or gedanken experiment, whose behavior is described by the joint conditional probability distribution $p_{ab...|xy...}$. Using a suitable enumeration of the coefficients, this distribution can be written as a vector $\vec{P} \in \mathbb{R}^d$, where $d = (m k)^n$.

When a quantum setup can be modeled precisely, the distribution $p_{ab...|xy...}$ is exactly given by the Born rule. In experimental situations, however, the distribution $\vec{P}$ is estimated by an experimental {\em run} recording the outcomes of a certain number of {\em trials} in the counts $N(ab...xy...)$. From these counts, the outcome probability distribution $\vec{P}$ is approximated using the relative frequencies $p_{ab...|xy...} \approx N(ab...xy...)/N(xy...)$. A recent example is given by the loophole-free Bell test of Shaml et al.~\cite{Shalm2015}~(Supplemental Material): in the scenario $(2,2,2)$, the authors present the event counts of six experimental runs, each run collecting approximately $2 \cdot 10^8$ trials.

\section{The label equivalence principle and representations}
\label{sec:labelingequivprinciple}

We first derive a general principle from the device-independent framework which motivates our introduction of elements of representation theory in the study of nonlocality. For simplicity, in this paper, we will restrict ourselves to the (2,2,2) scenario in which the CHSH inequality is tested. Our results generalizes straightforwardly~\cite{tobepublished}.
In that scenario, the behavior of a setup is described by $p_{ab|xy}$ with $a,b=0,1$ and $x,y=0^*,1^*$ (the asterisk is purely cosmetic to distinguish settings and outcomes). The same probability distribution is interchangeably written as the vector $\vec{P} \in V = \mathbb{R}^{d=16}$.

\begin{figure}
\centering
\includegraphics[width =.5 \columnwidth]{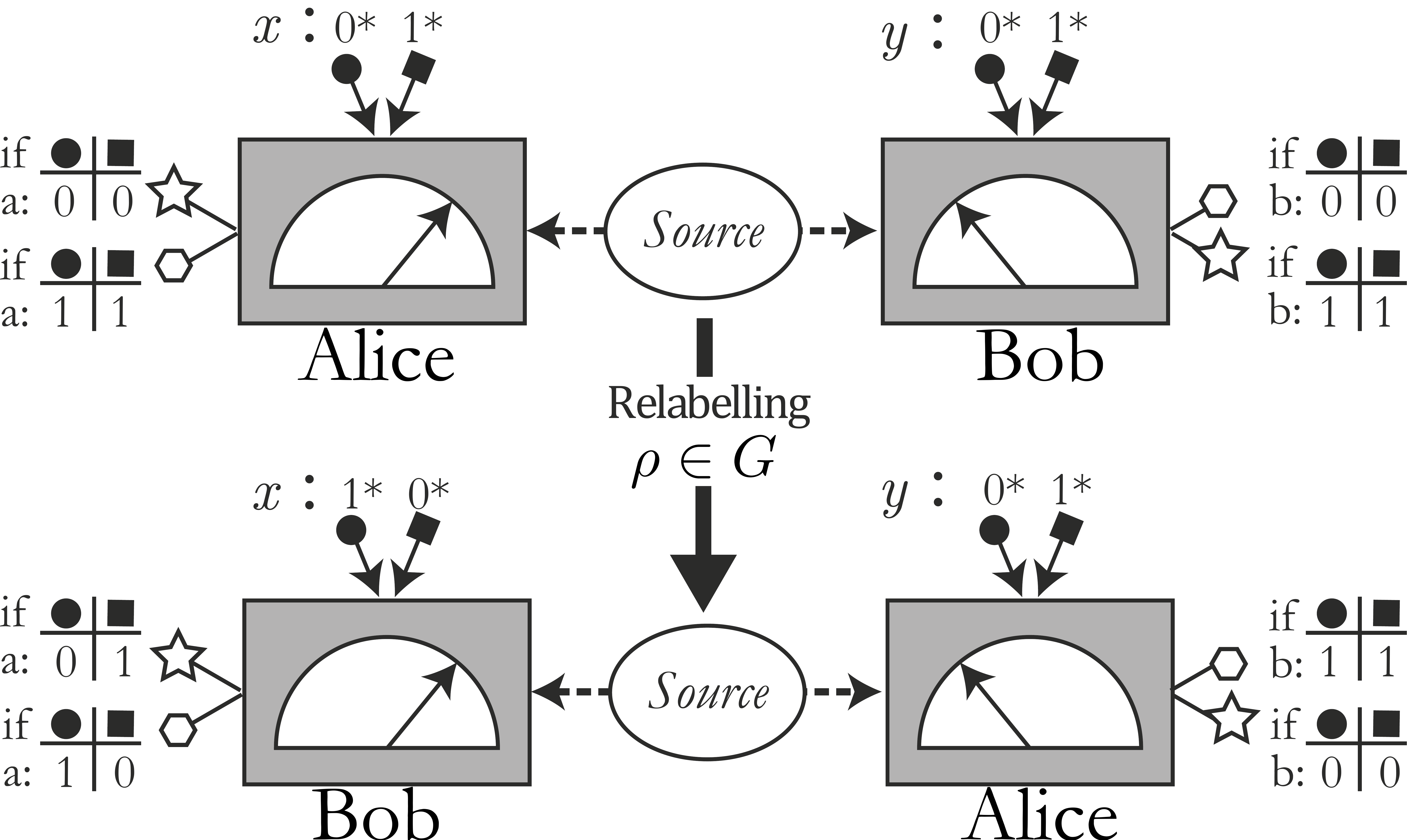}
\caption{\label{fig:relab}An experiment testing the CHSH inequality in the (2,2,2) scenario in a given labeling convention. Below, the same experiment using a different labeling convention, along with a relabeling map $\rho\in G$. Note that $\rho$ may exchange $a=0$ with $a=1$ when $x=0^*$ and do nothing when $x=1^*$.}
\end{figure}

However, to write the probability distribution $p_{ab|xy}$ above, a certain {\em labeling convention} has to be used, deciding who is Alice ($a,x$) and Bob ($b,y$), and which measurement setting (respectively measurement outcome) corresponds to $0^*$ and $1^*$ (respectively $0$, $1$). The transformation between labeling conventions is given by the {\em relabeling group} $G_{nmk}$, written $G$ when no confusion is possible. It is fully described in~\ref{app:step} and contains the relabelings of the parties, the settings, and the outcomes: a specific relabeling is given in~\figurename{~\ref{fig:relab}}. In a given scenario, the labels of parties, settings, and outcomes are purely abstract and have no physical meaning. This leads to the following principle:

\begin{definition}[Label equivalence principle]
In the study of nonlocality and device-independent protocols, all labeling conventions are equivalent.
\end{definition}

This principle is already present in all study of nonlocality. As an example, let us consider the constraints satisfied by nonsignaling probability distributions. All probability distributions are nonnegative:
\begin{equation}\label{eq:nonnegative}
\forall a,b,x,y, \quad p_{ab|xy} \ge 0,
\end{equation}
and normalized:
\begin{equation}\label{eq:normalization}
\forall x, y, \sum_{ab} p_{ab|xy} = 1.
\end{equation}
In addition, the nonsignaling probability distributions satisfy:
\begin{equation}\label{eq:nonsignaling}
\forall x, y, a, \sum_b p_{ab|xy} = p_{a|x} \text{ and } \forall x, y, b, \sum_a p_{ab|xy} = p_{b|y}.
\end{equation}
Taken as sets of constraints, each of Eqs.~\eqref{eq:nonnegative}-\eqref{eq:nonsignaling} satisfy the label equivalence principle.\\
To understand the deeper phenomenon at play here, let us make an analogy. In special relativity, the Lorentz group specifies the transformations between reference frames, and in turn, the physics of a relativistic system is contained in the representations of the Lorentz group. Similarly, a natural description of quantum behaviors should be given by the representations of the relabeling group $G$. Already, $V \ni \vec{P}$ is a representation of $G$, as the linear action of $G$ is to permute the coefficients of the probability distribution $p_{ab|xy}$. We then decompose $V$ into the invariant subspaces corresponding to irreducible representations of $G$~\cite{Serre}, and match the resulting subspaces to physical properties. Some of these properties are readily identified: under $G$, properly normalized distributions stay normalized and nonsignaling distributions stay nonsignaling. The full decomposition of $V$ into irreducible representations is presented in Section~\ref{sec:decompo}. To achieve our goal of optimizing the coefficients of a Bell inequality, a coarser-grained decomposition into three orthogonal subspaces is sufficient: the first subspace corresponds to the normalization conditions Eq.~\eqref{eq:normalization}, the second to nonsignaling content satisfying Eq.~\eqref{eq:nonsignaling}, the last to the signaling content of the probability distribution. This decomposition is the unique one respecting the label equivalence principle.\footnote{
Let us make a side remark. What is the most fundamental object in the study of nonlocality? On the one hand, we can argue that $p_{ab...|xy...}$ is the fundamental object. The composition of the relabeling group $G$ follows from the nonsignaling constraints: nonnegativity Eq.~\eqref{eq:nonnegative} prescribes a permutation representation on $\vec{P} \in V = \mathbb{R}^d$, so that $G$ is isomorphic to a subgroup of the symmetric group $\mathfrak{S}_d$. The exact subgroup is the one preserving the normalization Eq.~\eqref{eq:normalization} and nonsignaling Eq.~\eqref{eq:nonsignaling} constraints, recovering $G$. On the other hand, the relabeling group $G$ can be taken as the fundamental object, whose irreducible representations provide directly the components of $\vec{P}$ respecting the nonsignaling principle. However, there are other irreducible representations of $G$ that do not appear in the decomposition of $V \ni \vec{P}$. It is an open question whether they play a role in the study of nonlocality.
}

\section{Constructing the optimal variant of a Bell inequality}
\label{sec:optimal}

Having motivated our decomposition of $\vec{P} \in V$ into normalization, nonsignaling and signaling subspaces, we look to optimize the statistical properties of a given Bell inequality. We consider a bipartite nonsignaling setup. This setup can be either a classical or quantum experiment, or even a gedanken  experiment simulated on a computer. Because we study the nonlocal properties of the setup, we immediately forget about the description of the physical system and measurements to only remember the probabilities $p^{\,\rm setup}_{ab|xy}$ that define it. As the distribution $p^{\,\rm setup}_{ab|xy}$ is nonsignaling, it satisfies Eqs.~\eqref{eq:nonnegative}-\eqref{eq:nonsignaling}.

When the setup exhibits nonlocality, the probability distribution $p^{\,\rm setup}_{ab|xy}$  violates a Bell inequality. This Bell inequality can be written using coefficients $\beta_{abxy}$ along with a local bound $u$ such that
\begin{equation}\label{eq:bellinequalityvalue}
I = \sum_{abxy} \beta_{abxy} p_{ab|xy} = \vec{\beta} \cdot \vec{P} \overset{\mathrm{local}}{\le} u,
\end{equation}
is satisfied when the distribution $p_{ab|xy}$ is local. In the above equation, we also write $\beta_{abxy}$ as a vector $\vec{\beta} \in V = \mathbb{R}^d$ such that the computation of the inequality value is an inner product. Our method is based on the following proposition, which is a simplified version of the Theorem presented in Section~\ref{sec:decompo}:
\begin{proposition}
$V$ splits into three orthogonal subspaces $V = V_\mathrm{NO} \oplus V_\mathrm{NS} \oplus V_\mathrm{SI}$ invariant under $G$. Accordingly, the coefficient vectors of probability distributions $\vec{P}$ and Bell inequalities $\vec{\beta}$ can be decomposed into orthogonal vectors:
\begin{equation}
\vec{P} = \vec{P}_\mathrm{NO} + \vec{P}_\mathrm{NS} + \vec{P}_\mathrm{SI}, \quad 
\vec{\beta} = \vec{\beta}_\mathrm{NO} + \vec{\beta}_\mathrm{NS} + \vec{\beta}_\mathrm{SI}.
\end{equation}
In this decomposition, $\vec{P}_\mathrm{NO}$,  $\vec{P}_\mathrm{NS}$ and $\vec{P}_\mathrm{SI}$ correspond to the normalization, the nonsignaling content and the signaling content of the probability distribution $\vec{P}$, with the following relations:
\begin{enumerate}
\item $\vec{P}_\mathrm{NO}= \frac{1}{4} \vec{1}$ for all probability distributions, where $\vec{1}=(1,1,...,1)^\top$ (normalization),
\item $\vec{P}_\mathrm{SI} = \vec{0}$ for all nonsignaling probability distributions.
\end{enumerate}

\end{proposition}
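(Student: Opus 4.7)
The plan is to exhibit the three subspaces directly and reduce the proposition to an orthogonal-projection computation. I would set $V_\mathrm{NO}=\mathrm{span}(\vec{1})$ (the line through the all-ones vector), let $V_\mathrm{NS}$ be the linear subspace of vectors $\vec{v}\in V$ satisfying the homogeneous versions of Eqs.~\eqref{eq:normalization}--\eqref{eq:nonsignaling} (namely $\sum_{ab} v_{abxy}=0$ for all $x,y$, together with $\sum_b v_{abxy}$ independent of $y$ and $\sum_a v_{abxy}$ independent of $x$), and define $V_\mathrm{SI}=(V_\mathrm{NO}\oplus V_\mathrm{NS})^\perp$. With these definitions, the decomposition $V = V_\mathrm{NO}\oplus V_\mathrm{NS}\oplus V_\mathrm{SI}$ reduces to checking $V_\mathrm{NO}\perp V_\mathrm{NS}$: any $\vec{v}\in V_\mathrm{NS}$ satisfies $\vec{1}\cdot\vec{v}=\sum_{xy}\sum_{ab} v_{abxy}=0$, and orthogonality with $V_\mathrm{SI}$ is built into its definition.

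For $G$-invariance, the relabeling group acts on $V$ by permuting coordinates and therefore acts orthogonally. It fixes $\vec{1}$ pointwise, so $V_\mathrm{NO}$ is invariant. The paper has already noted (just after Eq.~\eqref{eq:nonsignaling}) that the normalization and nonsignaling constraint sets are preserved by $G$; translating this to the associated homogeneous linear spaces gives invariance of $V_\mathrm{NS}$, and $V_\mathrm{SI}$ is then automatically $G$-invariant by orthogonality. To verify the physical claims, observe that the orthogonal projector onto $V_\mathrm{NO}$ sends $\vec{v}$ to $\bigl((\vec{1}\cdot\vec{v})/16\bigr)\vec{1}$: for any normalized distribution, $\vec{1}\cdot\vec{P}=\sum_{xy}\sum_{ab} p_{ab|xy}=4$, giving $\vec{P}_\mathrm{NO}=\tfrac{1}{4}\vec{1}$, which is claim (i). For (ii), if $\vec{P}$ is nonsignaling then $\vec{P}-\tfrac{1}{4}\vec{1}$ lies in $V_\mathrm{NS}$ by direct inspection of Eqs.~\eqref{eq:normalization}--\eqref{eq:nonsignaling}, so $\vec{P}\in V_\mathrm{NO}\oplus V_\mathrm{NS}$ and its component along $V_\mathrm{SI}$ vanishes.

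The only delicate point is the $G$-invariance of $V_\mathrm{NS}$: $G$ contains conditional relabelings of outcomes (as illustrated in Figure~\ref{fig:relab}) whose action on $a$ depends on the setting $x$, and one has to check that such maps send marginals to marginals correctly. This is a small but not entirely mechanical verification, which ultimately follows from the very construction of $G$ in~\ref{app:step}. Everything else is a routine projector computation, and the subspaces identified here can be refined further into irreducible components of $G$ as announced in the full theorem of Section~\ref{sec:decompo}; the present three-piece decomposition, however, falls out immediately from the orthogonal-complement construction with no finer representation-theoretic input.
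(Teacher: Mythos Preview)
Your argument is correct as a proof of existence: your three subspaces are orthogonal, $G$-invariant, and satisfy (i) and (ii). However, the decomposition you obtain is \emph{not} the one the paper constructs. In the paper, $V_\mathrm{NO}$ is the $4$-dimensional span of $\vec{Q}_{++\pm\pm}$ (see Table~\ref{tab:components}), whereas your $V_\mathrm{NO}=\mathrm{span}(\vec{1})$ is only $V_{\mathrm{NO}_1}$. Consequently your $V_\mathrm{SI}$ is $7$-dimensional and contains, besides the paper's $4$-dimensional signaling subspace, the three directions $\vec{Q}_{+++-},\vec{Q}_{++-+},\vec{Q}_{++--}$, which detect failure of normalization rather than signaling. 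So your claim that $V_\mathrm{SI}$ ``corresponds to signaling content'' is not quite right: a vector with, say, $\sum_{ab} v_{ab|00}=1$ but $\sum_{ab} v_{ab|01}=0$ lands in your $V_\mathrm{SI}$ without any signaling whatsoever.

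The paper obtains the intended $4{+}8{+}4$ split via the irreducible decomposition of $V$ under $G$ (Section~\ref{sec:decompo} and \ref{app:step}), then coarse-grains the six irreducibles into $V_\mathrm{NO},V_\mathrm{NS},V_\mathrm{SI}$. This buys more than your elementary orthogonal-complement construction: it singles out the \emph{unique} decomposition into $G$-invariant pieces matched to the three physical properties, and in particular makes $\vec{P}_\mathrm{NO}=\tfrac14\vec{1}$ an \emph{equivalence} with normalization rather than a one-way implication. For the downstream optimization in Section~\ref{sec:optimal} your version would still work in practice (experimental $\vec{P}^{\mathrm{run}}$ is normalized by construction, so the extra three directions carry no variance), but the representation-theoretic route is what justifies calling the remaining piece ``signaling'' and what underlies the uniqueness claim made just before the Proposition.
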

\begin{proof}
See Section~\ref{sec:decompo} and~\ref{app:step}.
\end{proof}

\begin{figure}
\centering
\includegraphics[width =.5 \columnwidth]{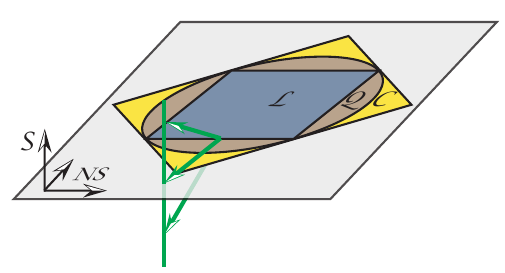}
\caption{\label{fig:decomp} Variants of Bell inequalities $\vec{\beta}$, which only differ by their component $\vec{\beta}_\mathrm{SI}$ over the signaling subspace. These variants are equivalent when evaluated on nonsignaling probability distributions. The letters $\mathcal{L}$, $\mathcal{Q}$, $\mathcal{C}$ represent respectively the sets of local, quantum, nonsignaling probability distributions~\cite{Brunner2014}.}
\end{figure}

This decomposition can also be deduced from Section 2. Indeed, the Eq.~\ref{eq:normalization} and \ref{eq:nonsignaling} are linear conditions on $\vec{P}$, which can be directly translated into equations about a projection of $\vec{P}$. Here Eq.~\ref{eq:normalization} and \ref{eq:nonsignaling} respectively correspond to (i), over projection $\vec{P}_{NO}$, and (ii), over projection $\vec{P}_{SI}$. The last component $\vec{P}_{NS}$, which is not constrained by normalization and nonsignaling (and then contains all the physical information), is defined as $\vec{P}_{NS}=\vec{P}-\vec{P}_{NO}-\vec{P}_{SI}$.

Using that proposition and the orthogonality relations, we rewrite the scalar product Eq.~\eqref{eq:bellinequalityvalue}~:
\begin{equation}
 I = 
\vec{\beta} \cdot \vec{P} = 
\underset{= \text{ cte}}{(\vec{\beta}_\mathrm{NO} \cdot \vec{P}_\mathrm{NO})} + 
(\vec{\beta}_\mathrm{NS} \cdot \vec{P}_\mathrm{NS}) + 
\underset{=0 \text{ if nonsignaling}}{(\vec{\beta}_\mathrm{SI} \cdot \vec{P}_\mathrm{SI})} \overset{\mathrm{local}}{\leq} u,
\end{equation}

enabling the following transformations of a Bell inequality:

\begin{itemize}
\item The local bound $u$ of the inequality can easily be translated by modifying the coefficients $\vec{\beta}_\text{NO}$, 
\item The inequality can be rescaled by multiplying both $\vec{\beta}$ and $u$ by the  same positive factor.
\item The coefficients $\vec{\beta}_\mathrm{SI}$ can be set to an arbitrary value.
\end{itemize}

We see from the first two transformations that a comparison of the violation $I$ across Bell inequality variants is only possible if the variants are first rescaled such that each of their local and quantum bounds match. Such a rescaling has no physical meaning and is pure convention.  What happens with the third transformation is more surprising\footnote{Some papers report the ratio $s = (I_\mathrm{exp} - u)/\sigma$ representing the magnitude of the violation as the number of standard deviations above the local bound. This ratio $s$ is invariant under translation and scaling.}. In experiments, the standard deviation $\sigma$ of the violation $I^\mathrm{exp}=\vec{\beta} \cdot \vec{P}^\mathrm{exp}$ has a dependence on $\vec{\beta}_\mathrm{SI}$, and thus the component $\vec{\beta}_\mathrm{SI}$ can be optimized.

\subsection{Spurious signaling due to finite statistics}\label{sec:spusignfinstat}
In an experimental {\em run}, the distribution $p^{\,\rm setup}_{ab|xy}$ is estimated by repeating measurements for a certain number $N$ of trials. For each trial, a setting pair $(x,y)$ is selected and the outcomes $(a,b)$ of the corresponding measurements are recorded. Writing $N(abxy)$ the number of events corresponding to settings $(x,y)$ and outcomes $(a,b)$, we estimate:
\begin{equation}\label{normalization}
p^{\,\rm run}_{ab|xy} = \frac{N(abxy)}{N(xy)}.
\end{equation}
We saw that $\vec{P}^{\,\rm setup}_\mathrm{SI} = \vec{0}$. However, because $\vec{P}^{\,\rm run}$ is estimated using a finite number of samples, statistical fluctuations lead to signaling outcome probability distributions: for a particular run, we generally have $\vec{P}^{\,\rm run}_\mathrm{SI} \ne \vec{0}$. Still, by construction, the normalization component $\vec{P}^{\,\rm run}_\mathrm{NO}$ is always constant. 
 By the central limit theorem, $\vec{P}^{\,\rm run}$ is well approximated by a multivariate normal distribution $\mathcal{N}(\vec{\mu}, \Sigma)$, where the mean vector $\vec{\mu} = \vec{P}^{\,\rm setup}$ and the covariance matrix $\Sigma_{abxy,a'b'x'y'}$ could be computed from the multinomial distribution. In practice, the Monte-Carlo estimation of $\Sigma$ performs well.

For each run, we can compute the Bell violation $I^{\,\rm run} = \vec{\beta} \cdot \vec{P}^{\,\rm run}$. If we were to perform infinitely many trials in each run, the average value $\left < I^{\,\rm run} \right >$ of the Bell inequality would be given by $I^{\,\rm setup}$, as the average signaling component $\left < \vec{P}^{\,\rm run}_\mathrm{SI} \right >$ would be zero. However, a run is composed of a finite number of trials, creating statistical fluctuations. The variance $\sigma$ of the Bell violation $I^{\,\rm run} = \vec{\beta} \cdot \vec{P}^{\,\rm run}$ is given by $\sigma^2 = \vec{\beta}^{\top} ~ \Sigma ~ \vec{\beta}$: it depends on $\vec{\beta}_\mathrm{SI}$, which can be optimized. 

\subsection{Construction of the optimal variant}
\label{sec:conoptvar}

Finding the optimal variant of a Bell inequality boils down to the minimization of $\sigma$ over the variables $\vec{\beta}_\mathrm{SI}$. We prove in \ref{app:Derivation} that the variant of the inequality $\vec{\beta}^*$ whose value has minimal variance is given by:
\begin{equation}\label{eq:optimalvariant}
\vec{\beta}^* = \left (\bar{\Pi} -\Pi\left(\Pi \,\Sigma \,\Pi+\bar{\Pi}\right)^{-1}\Pi \, \Sigma \,\bar{\Pi} \ \right ) \vec{\beta},
\end{equation}
where $\Pi$ is the orthogonal projection over the signaling space $V_\mathrm{SI}$ (given in \ref{app:Derivation} and \tablename{~\ref{tab:basis}}) and $\bar{\Pi}=\id-\Pi$. If the matrix $\left(\Pi \,\Sigma \,\Pi+\bar{\Pi}\right)$ is not invertible, the inverse can be replaced by the Moore-Penrose pseudo-inverse.

Note that the implementation of Eq.~\ref{eq:optimalvariant} , which is the solution to a minimization problem of a quadratic function (standard problem in optimization), can be done easily and efficiently.

\subsection{Implementation on simulated loophole-free Bell experiments}

To illustrate the impact of our method, we study the robustness the CH, CHSH and Eberhard inequalities to the effects of finite statistics. Then, we derive the optimal Bell inequality. The original expressions have been shifted and rescaled so that the local bound is at $0$ and the maximal quantum violation at $2(\sqrt 2 -1)$. Specifically:\footnote{Our convention is $
\vec{\beta} = 
\left(
\begin{array}{rr|rr}
\beta_{000^{*}0^*} & \beta_{010^*0^*} & \beta_{000^*1^*} & \beta_{010^*1^*} \\
\beta_{100^*0^*} & \beta_{110^*0^*} & \beta_{100^*1^*} & \beta_{110^*1^*} \\ \hline
\beta_{001^*0^*} & \beta_{011^*0^*} & \beta_{001^*1^*} & \beta_{011^*1^*}  \\
\beta_{101^*0^*} & \beta_{111^*0^*} & \beta_{101^*1^*} & \beta_{111^*1^*}   \\
\end{array}
\right).
$}
\begin{itemize}
\item The CHSH inequality~(\cite{Hensen2015}) is shifted by $-\frac{1}{8}\vec{1}$. We consider $\vec{\beta}^{CHSH}=\vec{\beta}^{CHSH}_{\cite{Hensen2015}}-\frac{1}{2}\vec{1}$. 
$$
\vec{\beta}^{CHSH} = 
\left(
\begin{array}{rr|rr}
0.5 & -1.5 & 0.5 & -1.5 \\
-1.5 & 0.5 & -1.5 & 0.5 \\ \hline
0.5 & -1.5 & -1.5 & 0.5  \\
-1.5 & 0.5 & 0.5 & -1.5   \\
\end{array}
\right)
$$
\item The CH inequality~(\cite{Clauser1974}) has its marginal terms $p^\mathrm{A}_{a|x}$ and $p^\mathrm{B}_{b|y}$ interpreted as $p^\mathrm{A}_{a|x}= \sum_{b} p_{ab|x0}$ and  $p^\mathrm{B}_{b|y} = \sum_{b} p_{ab|0y}$, respectively. To obtain the same maximal violation, the inequality is scaled by: $\vec{\beta}^{CH}=4 \vec{\beta}^{CH}_{\cite{Clauser1974}}$.
$$
\vec{\beta}^{CH} = 
\left(
\begin{array}{rr|rr}
-4 & -4 & 4 & 0 \\
-4 & 0 & 0 & 0 \\ \hline
4 & 0 & -4 & 0  \\
0 & 0 & 0 & 0   \\
\end{array}
\right)
$$
\item The EH inequality~(\cite{Shalm2015}) corresponds to the CH inequality with marginal terms interpreted as $p^\mathrm{A}_{a|x}= \sum_{b} p_{ab|x1}$ and  $p^\mathrm{B}_{b|y} = \sum_{b} p_{ab|1y}$. It is scaled by  $\vec{\beta}^{EH}=4 \vec{\beta}^{EH}_{\cite{Shalm2015}}$. 
$$
\vec{\beta}^{EH} = 
\left(
\begin{array}{rr|rr}
0 & 0 & 0 & -4 \\
0 & 4 & 0 & 0 \\ \hline
0 & 0 & 0 & 0  \\
-4 & 0 & 0 & -4   \\
\end{array}
\right)
$$
\end{itemize}

To illustrate our method, we simulate two types of experiments. The first simulation is inspired by the experiment described in~\cite{Hensen2015}. We suppose that the source deterministically provides pairs of entangled photons. In this configuration, the vacuum component is negligible. For the second case, we assume that the source is based on the spontaneous parametric down conversion process, as employed in~\cite{Shalm2015}. In this last case, the vacuum component is predominant. As described in \ref{app_prob}, we use the features given in~\cite{Shalm2015,Hensen2015} to generate the following probability distributions: 
\begin{equation}
 \label{eq_prob_delft}
\scalebox{0.85}{$
\vec{P}^{(1)} = 
\left(
\begin{array}{cc|cc}
0.39 &  0.09 & 0.35 & 0.13 \\
 0.08 & 0.44 & 0.12 & 0.40 \\ \hline
0.39 & 0.09 & 0.10 & 0.38  \\
0.08 & 0.44 & 0.37 &  0.15  \\
\end{array}
\right)
%
\vec{P}^{(2)} = 
\left(
\begin{array}{cc|cc}
1-4.0\times 10^{-5} & 1.0\times 10^{-5} & 1-9.8\times 10^{-5} & 8.7\times 10^{-6} \\
 9.7\times 10^{-6} & 2.0\times 10^{-5} & 6.7\times 10^{-5} & 2.2\times 10^{-5} \\ \hline
1 - 9.8\times 10^{-5}& 6.8\times 10^{-5} & 1-1.8\times 10^{-4} & 9.0\times 10^{-5}  \\
8.3\times 10^{-6} & 2.2\times 10^{-5} & 8.9\times 10^{-5} & 4.7\times 10^{-7}   \\
\end{array}
\right)
$}
\end{equation}
To see the effect of the finite number of trials on the violation, we performed a Monte-Carlo type simulation. Given a number of trials, the algorithm returns a simulated number of detection events for the sixteen combinations of settings and outcomes.

\begin{figure}
\center
\includegraphics[width=0.47\columnwidth]{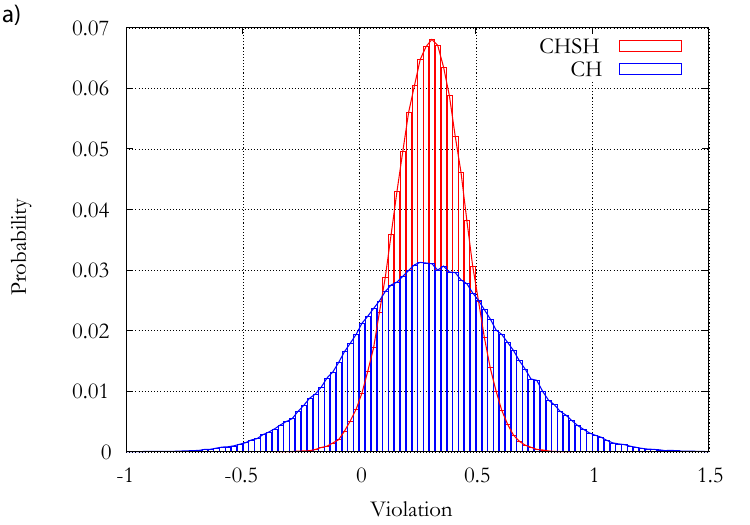}
\includegraphics[width=0.47\columnwidth]{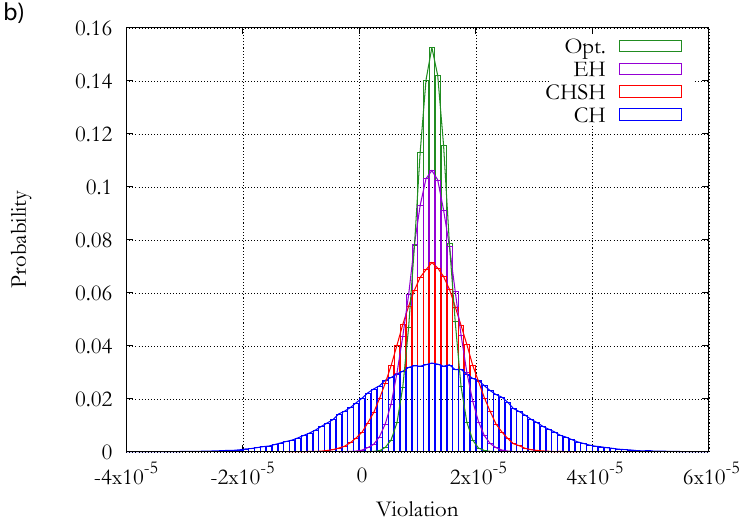}
\caption{a. Violation of Bell inequality for $\vec{P}^{(1)}$~(\cite{Hensen2015}). We observe same average violation $\left < I \right > = 0.302$ but different standard deviation $\sigma^{\,\rm CHSH} = 0.211$ and $\sigma^{\,\rm CH} = 0.464$. b. Violation of Bell inequality for $\vec{P}^{(2)}$~(\cite{Shalm2015}). We observe same average violation $\left < I \right > = 1.25\times10^{-5}$ but different standard deviation $\sigma^{\,\rm CHSH} = 5.65\times10^{-6}$, $\sigma^{\,\rm CH} = 1.20\times10^{-5}$, $\sigma^{\,\rm EH} =3.72\times10^{-6}$ and $\sigma^{\, \rm opt} = 2.60\times10^{-6}$. In both cases, we renormalized the inequalities such as the bound for the violation is $0$ (and not $2$ for CHSH) and such as the maximal quantum violation is the same.}
\label{fig:plotviolation}
\end{figure}

 The histogram distributions obtained for the two experiments after 200'000 simulated runs are given in \figurename{~\ref{fig:plotviolation}}. The \figurename{~\ref{fig:plotviolation}a} shows the violation distribution obtained when we perform 245 steps with the probability distribution $\vec{P}^{(1)}$ given in Eq.~\eqref{eq_prob_delft}. The CHSH and CH inequalities give the same amount of violation in average without the same standard deviation. As CHSH and CH only differs on their signaling component, this clearly show the influence of the signaling part on the statistical fluctuations around the mean violation. In this case, because the distribution obeys a particular symmetry (up to experimental imperfections), the inequality variant with minimal variance is CHSH (see~\ref{app:symarg}).

We also observe this effect with a non-maximally entangled photon pair source described by the probability distribution $\vec{P}^{(2)}$ given in Eq.~\eqref{eq_prob_delft}.  \figurename{~\ref{fig:plotviolation}b} shows the violation distributions obtained for $176\times 10^6$ steps. As before, we obtain for the three inequalities, CHSH, CH, and EH, the same amount of violation in average with different standard deviations.  Moreover, by using Eq.~\eqref{eq:optimalvariant}, we found an inequality that exploits the noise correlation between the probabilities to reduce the influence of the statistical error. This optimal inequality is: 
\begin{equation}
\vec{\beta}^\text{opt} = 
\left(
\begin{array}{rr|rr}
0 & -1.5 & 0 & -0.5 \\
-1.5 & 1 & -2.5 & 1 \\ \hline
0 & -2.5 & 0 & 0.5  \\
-0.5 & 1 & 0.5 & -3   \\
\end{array}
\right)
\end{equation}
As expected, we see in \figurename{~\ref{fig:plotviolation}b} that this inequality gives the same average violation with an improved standard deviation. This upgrades the violation, from $1.0\,\sigma$ with CH and $3.4\,\sigma$ with EH (considered in \cite{Shalm2015}) to $4.8\,\sigma$} with the optimal inequality.

\section{Physical properties of scenarios naturally emerge from the symmetry group}
\label{sec:decompo}

We saw in \sectionname{\ref{sec:labelingequivprinciple}} that the description $\vec{P}$ of a device-independent setup is made using an arbitrary labeling convention. Moreover, there exists a group $G$ of transformations, the relabelings, between those conventions and $V \ni \vec{P}$ is a representation of that group. In the following, we decompose it into irreducible components and identify each irreducible component with a corresponding physical property. By construction, those properties are independent from the labeling convention in use (for an introduction to representation theory, see~\ref{app:rep}).

\begin{theorem-non}
\label{theorem}
In the scenario $(2,2,2)$, the representation $V$ of $G$ is reducible, with orthogonal irreducible nonequivalent components:
\begin{equation}
V = V_\mathrm{NO_1} \oplus V_\mathrm{NO_2} \oplus V_\mathrm{NO_3} \oplus V_\mathrm{marg} \oplus V_\mathrm{corr} \oplus V_\mathrm{SI} \eegale{\text{coarse}}{\text{graining}} V_\mathrm{NO} + V_\mathrm{NS} + V_\mathrm{SI},
\end{equation}
where the subspaces have their bases and associated physical meanings given in \tablename~\ref{tab:components}. This decomposition is unique as the finest decomposition that preserves the group structure. Any vector $\vec{P}$ can be decomposed as:
\begin{equation}
\vec{P} = \vec{P}_{\mathrm{NO_1}} + \vec{P}_{\mathrm{NO_2}} + \vec{P}_{\mathrm{NO_3}} + \vec{P}_{\mathrm{marg}} + \vec{P}_{\mathrm{corr}} + \vec{P}_{\mathrm{SI}} \eegale{\text{coarse}}{\text{graining}} \vec{P}_{\mathrm{NO}} + \vec{P}_{\mathrm{NS}} + \vec{P}_{\mathrm{SI}},
\end{equation}
such that for any $g \in G$, we have $\vec{P}^g=\vec{P}_{\mathrm{NO_1}}^g + \vec{P}_{\mathrm{NO_2}}^g + \vec{P}_{\mathrm{NO_3}}^g + \vec{P}_{\mathrm{marg}}^g + \vec{P}_{\mathrm{corr}}^g + \vec{P}_{\mathrm{SI}}^g$ with $\vec{P}_{\mathrm{LBL}}, \vec{P}_{\mathrm{LBL}}^g \in V_{\mathrm{LBL}}$ for all subspace labels $\mathrm{LBL}$. In particular, "$\vec{P}_{\mathrm{LBL}}\neq 0$" is independent from the labeling convention. This can be seen as a confirmation of our intuition: "the setup contains the physical property ${\mathrm{LBL}}$" is an assertion which is independent from the labeling convention.\\ 
The same decomposition applies to the coefficient vector $\vec{\beta}$ of a Bell inequality.
\end{theorem-non}

\begin{proof}
The irreducible decomposition of $V$ is constructed step by step in \ref{app:step}. 
\end{proof}
\begin{table}
\centering
\begin{tabular}{l l l l l}
  \hline
 Vector subspace & Basis & Physical meaning & $\mathrm{dim}$  \\
  \hline
  \hline
$V_{\mathrm{NO}} \equiv V_{\mathrm{NO_1}} \oplus V_{\mathrm{NO_2}} \oplus V_{\mathrm{NO_3}}$  &&  Normalization & $4$ \\  
\hspace{20pt}$V_{\mathrm{NO_1}}$& $\vec{Q}_{++++}$ &Normalization& $1$  \\ 
\hspace{20pt}$V_{\mathrm{NO_2}}$& $\vec{Q}_{+++-}, \vec{Q}_{++-+}$ &Normalization& $2$  \\ 
\vspace{5pt}
\hspace{17pt}$V_{\mathrm{NO_3}}$& $\vec{Q}_{++--}$ &Normalization& $1$  \\ 
$V_{\mathrm{NS}} \equiv V_{\mathrm{marg}} \oplus V_{\mathrm{corr}}$ &&  Nonsignaling  & $8$  \\ 
\hspace{20pt}$V_{\mathrm{marg}} (\equiv V_{\mathrm{marg}}^\mathrm{A} + V_{\mathrm{marg}}^\mathrm{B})$ & &  Marginals  & $4$   \\  
\hspace{40pt}$(V_{\mathrm{marg}}^\mathrm{A})$ & $\vec{Q}_{-+\pm+}$ &  Alice marginals  & $2$  \\  
\vspace{5pt}  
\hspace{37pt}$(V_{\mathrm{marg}}^\mathrm{B})$ & $\vec{Q}_{+-+\pm}$ &  Bob marginals  & $2$  \\  
\vspace{10pt}  
\hspace{17pt}$V_{\mathrm{corr}}$ & $\vec{Q}_{--\pm\pm}$ &  Correlations & $4$  \\
$V_{\mathrm{SI}} (\equiv V_{\mathrm{SI}}^{ \rightarrow \mathrm{B}} + V_{\mathrm{SI}}^{ \rightarrow \mathrm{A}})$ & &  Signaling & $4$  \\  
\hspace{20pt}$(V_{\mathrm{SI}}^{\rightarrow \mathrm{B}})$ & $\vec{Q}_{+--\pm}$ &  Signaling A $\rightarrow$ B  & $2$ \\  
\hspace{20pt}$(V_{\mathrm{SI}}^{\rightarrow \mathrm{A}})$ & $\vec{Q}_{-+\pm-}$ &  Signaling B $\rightarrow$ A  & $2$  \\  

  \hline
\end{tabular}
\caption{
\label{tab:components}
Irreducible components appearing in the probability distributions of bipartite Bell scenarios with their dimension and basis ($Q_{ijkl}(ab|xy) = i^a j^b k^x l^y$ for $i,j,k,l=\pm 1$, see \tablename~\ref{tab:basis}).}
\end{table}
We now interpret the elements that appear in the decomposition, identifying them to intuitive physical properties. We illustrate it with setups whose behavior $\vec{P}$ contain specific physical behavior (\tablename~\ref{tab:prob}):
\begin{itemize}

\item Normalization subspace $V_\mathrm{NO}$:

This subspace corresponds to the physical property ``$\vec{P}$ is properly normalized'', corresponding to the four normalization conditions~\eqref{eq:normalization}. For any normalized outcome distribution, $\vec{P}_{\mathrm{NO_1}}=\frac{1}{4}$ and $\vec{P}_{\mathrm{NO_2}}=\vec{P}_{\mathrm{NO_3}}=0$. 

In the setup UNI (\tablename{~\ref{tab:prob}}), Alice and Bob throw two unbiased coins for any setting. This results in the uniform distribution $\vec{P}_\mathrm{UNI} \in V_\mathrm{NO}$.

\item Marginals subspace $V_\mathrm{marg}$:

This subspace corresponds to the marginals of Alice and Bob respectively independent from Bob and Alice choice of setting. We further split it into $V_\mathrm{marg} \equiv V_\mathrm{marg}^\mathrm{A} + V_\mathrm{marg}^\mathrm{B}$ by discerning the roles of Alice and Bob, otherwise mixed by the group. 
The physical property ``Alice and Bob have uniform marginals'' ($p_{a|x} = p_{b|y} = 1/2$) is preserved under relabelings and corresponds to $\vec{P}_\mathrm{marg} = 0$. In the correlator notation~\cite{Brunner2014}, it corresponds to the $\big < A_x \big > = \frac{1}{2} \sum_{aby} (-1)^a p_{ab|xy}$ and $\big < B_y \big > = \frac{1}{2} \sum_{abx} (-1)^b p_{ab|xy}$ (providing that the distribution of measurement settings is itself unbiased for signaling setups).

In the setup BIAS (\tablename{~\ref{tab:prob}}), Alice throws a fair coin and Bob a $\frac{1}{4}/\frac{3}{4}$ biased coin  for any setting. We observe that $\vec{P}_\mathrm{BIAS} = \vec{P}_\mathrm{NO} + \vec{P}_\mathrm{marg}$.

\item Correlation subspace $V_\mathrm{corr}$:

This subspace corresponds to the correlation between Alice and Bob outcomes, i.e. whether $a = b$. Outcomes distributions $\vec{P} = \vec{P}_\mathrm{NO} + \vec{P}_\mathrm{corr}$ are symmetric under relabeling of all outcomes ($a \rightarrow 1 - a$ and $b \rightarrow 1 - b$), and are optimally tested by CHSH (see~\ref{app:symarg}). In the correlator notation, it corresponds to the elements $\big < A_x B_y \big > = \sum_{ab} (-1)^{ab} p_{ab|xy}$. 

In the setup BW (\tablename{~\ref{tab:prob}}), an object with color either black or white (with equal probability $\frac{1}{2}$) is cut it in two parts which are sent to Alice and Bob. The outcome measured by Alice and Bob attributes a number $a=0,1$ and $b=0,1$ to the colors. The resulting distribution, $\vec{P}_\mathrm{BW} = \vec{P}_\mathrm{NO} + \vec{P}_\mathrm{corr}$, contains correlations but is nonsignaling with uniform marginals. 

\item Signaling subspace $V_\mathrm{SI}$:

This subspace corresponds to the physical property ``the outcome distribution is signaling''. As for $V_\mathrm{marg}$, we discern the roles of Alice and Bob to split $V_\mathrm{SI} \equiv V_{\mathrm{SI}}^{\rightarrow \mathrm{B}} + V_{\mathrm{SI}}^{ \rightarrow \mathrm{A}}$, representing respectively the signaling from Alice to Bob (when $\vec{P}_{\mathrm{SI}}^{ \rightarrow \mathrm{B}} \ne 0$) and from Bob to Alice (when $\vec{P}_{\mathrm{SI}}^{ \rightarrow \mathrm{A}} \ne 0$). To be more explicit, $\vec{P}_{\mathrm{sign}}^{ \rightarrow \mathrm{A}}$ contains strictly the information about the dependence of Alice's outcome on Bob's choice of setting. 

In the setup SIG (\tablename{~\ref{tab:prob}}), Alice measurement setting is Bob outcome and Alice throws an unbiased coin. It contains only signaling from Alice to Bob. The resulting distribution is purely signaling: $\vec{P}_\mathrm{SIG} = \vec{P}_\mathrm{NO} + \vec{P}_\mathrm{SI}^{\rightarrow \mathrm{B}}$.
\end{itemize}
\tablename{~\ref{tab:prob}} also contains the setup OPTQ corresponding to the optimal quantum violation of CHSH and the setup PRBOX corresponding to the PR-box~\cite{Popescu1994}. Neither setup has marginal or signaling term. Their correlation terms $\vec{P}_\mathrm{corr}$ are collinear, with PRBOX more correlated than OPTQ. The last line NOISE gives the decomposition of the statistical fluctuations present when $\vec{P}^\mathrm{setup}$ is estimated using a finite number of samples. In general, there is noise in each of $V_\mathrm{marg}, V_\mathrm{corr}$ and $V_\mathrm{SI}$.
\begin{table}
\center
\begin{tabular}{|m{1.4cm}|m{3.8cm}|l|m{4.5cm}|}
  \hline
  Name & Nonzero components & $p_{ab|xy} $ & $\vec{P} $ \\
  \hline
  UNI & $V_\mathrm{NO}$ & $\frac{1}{4}$  &  $\frac{1}{4}\vec{Q}_{++++}$ \\
  \hline
  BIAS & $V_\mathrm{NO}$, $V_\mathrm{marg}^\mathrm{B}$ & $\frac{1}{2}.(\frac{1}{4}\delta_{b=0}+\frac{3}{4}\delta_{b=1})$ & $\frac{1}{4} \vec{Q}_{++++} - \frac{1}{8} \vec{Q}_{+-++}$  \\
  \hline
  SIG & $V_\mathrm{NO}$, $V_\mathrm{SI}^{\rightarrow \mathrm{B}}$ & $\frac{1}{2}\delta_{x=b}$  &  $\frac{1}{4} \vec{Q}_{++++} + \frac{1}{4} \vec{Q}_{+--+}$\\
  \hline
  BW & $V_\mathrm{NO}$, $V_\mathrm{corr}$ &  $\frac{1}{2} \delta_{a=b}$ &  $\frac{1}{4} \vec{Q}_{++++} + \frac{1}{4} \vec{Q}_{--++}$ \\
  \hline
  PRBOX & $V_\mathrm{NO}$, $V_\mathrm{corr}$  & $\frac{1}{2}\delta_{a+b=xy}$ &  $\frac{1}{4} \vec{Q}_{++++} + \frac{1}{8} \sum_{kl} \tau_{kl} \vec{Q}_{--kl}$ \\  
  \hline
  OPTQ & $V_\mathrm{NO}$, $V_\mathrm{corr}$ & $p_{ab|xy}^{\psi^-,\mathrm{th}}$ &  $\frac{1}{4} \vec{Q}_{++++} + \frac{1}{8 \sqrt{2}} \sum_{kl} \tau_{kl} \vec{Q}_{--kl}$ \\
  \hline
  NOISE & $V_\mathrm{marg}$, $V_\mathrm{corr}$, $V_\mathrm{SI}$ &  $\epsilon_{ab|xy}=p^\mathrm{exp}_{ab|xy} - p^\mathrm{setup}_{ab|xy}$  &  $\vec{\epsilon}_{\mathrm{marg}}$, $\vec{\epsilon}_{\mathrm{corr}}$, $\vec{\epsilon}_{\mathrm{SI}}$\\
    \hline
\end{tabular}
\caption{\label{tab:prob} List of setups used to identify the physical properties corresponding to irreducible representations, with NOISE corresponding to the statistical fluctuations studied in \sectionname{\ref{sec:optimal}}. For each, we list the nonzero components, their probability distribution $p_{ab|xy}$ and its decomposition on the subspaces (using $Q_{ijkl}(ab|xy) = i^a j^b k^x l^y$, see~\ref{app:step}). Here $\tau_{kl}=-1$ if $k=l=-1$ and $\tau_{kl}=+1$ otherwise. For the last line, one has $\vec{P}^{\mathrm{exp}}=\vec{P}^{\mathrm{setup}}+\vec{\epsilon}$.}
\end{table}

This theorem, whose generalization to all (n,m,k) scenarios will be given in a future note~\cite{tobepublished}, is a direct proof of the unicity of the decomposition made in~\cite{Rosset2014a}. In this work, Rosset \textit{et al.} defined a projection on the nonsignaling subspace that commutes with the relabelings. As shown here for the scenario (2,2,2), this projection is unique.

\section{Conclusion}\label{sec:conclusion}

The study of quantum nonlocality in the device-independent framework considers the labeling of parties, measurement settings and outcomes as pure convention. Transformations between labeling conventions are given by the relabeling group. We explored the structure of this relabeling group and of some of its representations. In particular, the decomposition of the coefficient space of a Bell inequality singles out the normalization and signaling subspaces. Arbitrarily many variants of a Bell inequality can be generated by changing the vector components corresponding to these subspaces; however, if the obtained inequalities are equivalent on nonsignaling distributions, their statistical properties differ in experimental use. We showed how to construct the inequality with minimal variance, which optimizes the number of standard deviations above the local bound in experimental works. We simulated the outcome distributions coming from two recent experiments~\cite{Hensen2015,Shalm2015} and discussed the optimality of variants such that CHSH, CH or Eberhard inequalities compared to our method. 

Recently, several authors~\cite{Bierhorst2015,Elkouss2015} introduced formal statistical tests to reject the locality hypothesis with a certain $p$-value threshold. These tests are heavily dependent on the inequality variant considered. We leave the generalization of our method to these tests as an open question.

To our knowledge, this work represents the first application of the representation theory of finite groups to the relabeling group of Bell scenarios. Here, we described the decomposition of the representation space corresponding to $p_{ab...|xy...}$. We leave the study of the other irreducible representations to future work. If, as we conjecture, the relabeling group is a fundamental object in the study of device-independent protocols, its representations should play a major role in other protocols. 

\section*{Acknowledgements}\label{sec:acknowledgements}
We thank Jean-Daniel Bancal, Nicolas Brunner and Richard D. Gill for stimulating discussions. Also, we thank Tomer Barnea, Florian Frowis and  Yeong-Cherng Liang for comments on this manuscript. 
This work was supported by the National
Swiss Science Foundation (SNSF), the European Research Council (ERC
MEC) and the SNSF Early Postdoc Mobility grant P2GEP2\_162060.

\appendix

\section{Generation of the probability distributions }
\label{app_prob}

\subsection{Deterministic  entangled state sources}

The probability distribution for the deterministic entangled state source is calculated from the experimental parameters given in~\cite{Hensen2015}. In this experiment, the authors entangled two NV centers using the electronic spin as a two dimensional system with basis states denoted as $\ket \uparrow$ and $\ket \downarrow$. They aim to prepare the two spins in a maximally entangled state of the form $\ket {\psi^-} = \left[ \ket{\uparrow \downarrow} -\ket {\downarrow \uparrow} \right]/\sqrt 2$, but due to imperfections in the state preparation, they generate a partially entangled state $\rho$ described by the density matrix: 
\begin{equation}
\rho = \dfrac{1}{2}\begin{pmatrix}
\lambda & 0 & 0 & 0 \\
0 & 1-\lambda & V & 0 \\
0 & V & 1-\lambda & 0 \\
0 & 0 & 0 & \lambda
\end{pmatrix},
\end{equation}
with $\lambda = 0.022 $ and $V = 0.873$.

The measurement apparatuses perform an imperfect projection along the Z axis. To take into account the measurement imperfections the projectors are defined by: 
\begin{equation}
\Pi^i_{0} =  \eta_+^i\Pi_{+} +  (1-\eta_-^i)\Pi_{-}, \qquad \Pi^i_{1} =  (1-\eta_+^i)\Pi_{+} +  (\eta_-^i)\Pi_{-},
\end{equation}
where $\Pi_{\pm} = (\id\pm \sigma_z)/2$. The readout fidelities are equal to $\eta_+^A  = 0.954$, $\eta_-^A  = 0.994$, $\eta_+^B = 0.939$, and $\eta_-^B  = 0.998$ for the measurement apparatuses of Alice and Bob. The outcome probabilities are then 
$p_{ab|xy} = {\rm Tr}\left[ (\Pi^A_a \otimes \Pi^B_b) R(\theta^A_x,\theta^B_y) \rho  R(\theta^A_x,\theta^B_y)^\dagger  \right]$
where $R(\theta^A_x,\theta^B_y)$ is the rotation operator with respect to Z applied on the two spins.  The authors found that the maximal violation of the CHSH inequality is obtained for the angles $\theta^A_0 = 0$, $\theta^A_1 = \pi/2$, $\theta^B_0 = -3\pi/4-\epsilon$, and $\theta^B_1 = 3\pi/4 +\epsilon$, with $\epsilon = 0.026 \pi$. With this parameter we obtained the probability distribution given in Eq.~\eqref{eq_prob_delft}. As expected, this probability distribution gives the same violation ($2.30$) of the CHSH-Bell inequality than the one they mention (corresponding to the shifted value $0.30 = 2.30 - 2$ given in \figurename{~\ref{fig:plotviolation}a}).

\subsection{Nondeterministic entangled state sources}

The nondeterministic entangled state source was inspired by~\cite{Shalm2015}, where two spontaneous parametric down conversion processes are used to generate polarization entangled photon pairs  of the form $\ket{\psi} =  0.961 \ket{HH} + 0.276\ket{VV}$.
Here, the main difference with the previous approach is that the source is based on a spontaneous processes. 
This means that most of the time no photons are generated and rarely one or several pairs are. 
The nonlinear photon-pair generation nonlinear process with a large number of modes is described by the operator: 
\begin{equation}
C_i(\mu_i) = e^{-\frac{\mu_i}{2}} \sum_{n=0}^4 \frac{\mu_i^{n/2}}{n!^{3/2}} (a^\dagger_i b^\dagger_i )^n,
\end{equation}
where $\mu_i$ is the mean number of photon pairs in the spatial modes $a$ and $b$: the former is given to Alice, the latter to Bob. 
To properly define the density matrix measured by the authors, we need to take into account all the photon losses, which can be applied to mode $a$ via the loss operator: 
$L_a(\eta) = \sum_{n=0}^4 (1-\eta)^{n/2} \eta^{a^\dagger a/2} \frac{a^n}{\sqrt{n!}}$, where $\eta$ is the transmission probability. 
The complete source, i.e.~with the two nonlinear processes and all the losses over all the modes, is depicted by the density matrix of the form $\rho = S\coh {0}{0} S^\dagger $ with $S(\mu_H, \mu_V, \eta_A, \eta_B) = L_{b_H}(\eta_B)L_{b_V}(\eta_B)L_{a_V}(\eta_A)L_{a_H}(\eta_A)C_H(\mu_H)C_V(\mu_V)$. 
To simulate the source proposed in~\cite{Shalm2015}, we used the parameters $\eta_A = 0.747$, $\eta_B = 0.756$, $\mu_V = \frac{\mu}{1+r^2}$, and $\mu_H = \frac{r^2\mu}{1+r^2}$ with $r=0.288$ and $\mu = 4\times 10^{-4}$. 

For the detection, only one non-photon-number-resolving detector is used on each side to measure the photons in the $H$ mode. 
The detection events correspond to outcome 1 and the non-detection to an outcome 0. The measurement apparatus (written here for Alice) is described by the operators: 
\begin{eqnarray}
\Pi^A_{1} = (\id - \coh {0}{0} )_{a_H} \otimes \id_{a_V, b_H, b_V} \text{, and }
\Pi^A_{0} = \coh {0}{0}_{a_H} \otimes \id_{a_V, b_H, b_V}.
\end{eqnarray}
The probabilities given in Eq.~\eqref{eq_prob_delft} are obtained by $ p_{ab|xy} = {\rm Tr}\left[ (\Pi^A_a  \Pi^B_b) R(\theta^A_x,\theta^B_y) \rho  R(\theta^A_x,\theta^B_y)^\dagger  \right]$  for the polarization measurement angles $\theta^A_1= -4.2^\circ$, $\theta^A_1= 25.9^\circ$, $\theta^B_0= -4.2^\circ$, and $\theta^B_1= 25.9^\circ$.

\section{Introduction to representation theory}
\label{app:rep}

Here, we give a brief introduction to the theory of finite group representations and its main results we use. For a more complete introduction, we adress the reader to~\cite{Serre}.

A group of symmetries $G$ corresponds to a finite set of transformations $g$ with an algebraic group structure: identity, inverse, composition in $G$.\\
\textit{In our case, these symmetries will be build from permutations of the settings, outputs and parties}.\\
While group $G$ contains the abstract structure of the transformations, its concrete effects are seen on a  representation of $G$ on a vector space $V$, for which any $g\in G$ is associated with a linear invertible transformation of V ($v\in V\mapsto v^g\in V$) where this association preserves the structure of $G$.\vspace{0.1cm}\\
\textit{For example, imagine we are interested in the Bell scenario "Alice tosses a coin", i.e. (1,1,2). $G$ is the group of transformations from one arbitrary labeling (e.g. Heads=0, Tails=1) of the result to any possible labeling ($G=\mathfrak{S}_2=\{Id,\rho\}$, where $Id$ is the identity map and $\rho$ exchanges 0 and 1). 
Then $G$ acts on $V=\mathbb{R}^2$ by: $Id$ does nothing and $\rho$ exchanges the coordinates (this is called the natural representation).}\vspace{0.1cm}\\
$W\subset V$ is said to be invariant by $G$ if for any $w\in W, g\in G$, we always have $w^g\in W$.
Then, if the only subspaces $W\subset V$ invariant by $G$ are $\{0\}$ and $V$, we say that this representation is irreducible and write: "$V$ is an irreducible representation of $G$".\vspace{0.1cm}\\
\textit{Back to Alice and her coin, it is easy to see that the representation is not irreducible: $v=\binom{1}{1}$ is a counter example as $v^{Id}=v^\rho=v$.}\vspace{0.1cm}\\
We then have the following result, at the foundation of representation theory:

\begin{theorem-non}
If $G$ is finite, there is a finite number of irreducible representations ($W_1, W_2, ...$). Any representation $V$ of $G$ can be uniquely decomposed as an orthogonal sum of irreducible representations $V\egale{G}W_{i_1}\oplus...\oplus W_{i_p}$.\\
At the level of the vectors, any $v\in V$ can be uniquely decomposed as $v=w_1+...+w_p$ with $w_k\in W_{i_k}$. Then $\forall g\in G, v^g=w_1^g+...+w_p^g$ with $w_k^g\in W_{i_k}$.
\end{theorem-non}

This result, which looks quite technical, has an important physical meaning. If our formalism describes the physical reality in a vector space $V$, if this description is redundant and if $G$ contains the transformations between the different descriptions of the same physical reality, then our formalism contains some degeneracy (being the choice of some specific $g\in G$). Decomposing $V$ into irreducible representations is a way to remove this degeneracy. Most properties of $v\in V$ are not universal. For example, "first coordinate of $v$ is 0" is often an arbitrary property, as it usually depends on the choice of $g\in G$ (in our case, the choice of labeling). However, some properties of the $w_k$ components are universal: "$w_1=0$" is independent of the choice of $g\in G$.\\
We adopt this notation: 
If $V$ decomposes into $W_1$ and $W_2$ under $G$, we write $V\egale{G}W_1\oplus W_2$.
\vspace{0.1cm}\\
\textit{With a (1,1,2) Bell scenario, there are two irreducible representations of dimension one, $W_1=t$ (trivial irreducible representation) and $W_2=s$ (sign irreducible representation), with $\rho$ acting as $Id$ over $t$ and as $-Id$ over $s$. Here $V=t\oplus s$, with $t$ generated by $\epsilon_+=\binom{1}{1}$ and $s$ generated by $\epsilon_-=\binom{1}{-1}$. The two physical properties are the norm ($t$) and the bias of the coin ($s$).}\vspace{0.1cm}\\ 
In the following, we apply these results to our specific case. The symmetry group is $G$, the group of relabelings. We are then interested in the "relabeling-independent" physical properties of a Bell setup. As we will see, there are only four such properties: the norm, the marginals, the correlations and the signaling contribution. 

\section{Step by step construction of the $(2,2,2)$ scenario along with its symmetries}
\label{app:step}

\begin{figure}
\center
\includegraphics[width=0.8 \columnwidth]{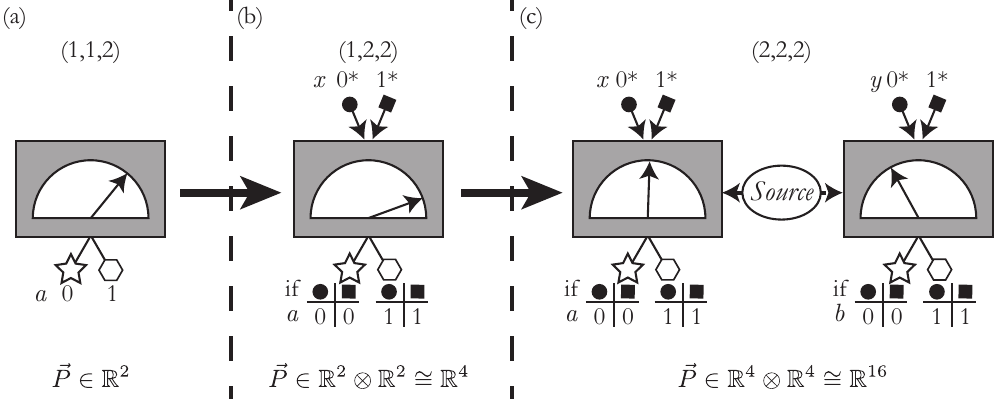}  
\caption{
Steps of the construction of the scenario $(2,2,2)$ made in~\ref{app:step}. We naturally go from scenario $(1,1,2)$ to scenario $(1,2,2)$ by adding two measurement settings, and to $(2,2,2)$ by adding two parties.
}
\label{fig_step}
\end{figure}

We prove here the \theoremname{} of \sectionname{\ref{sec:decompo}} by constructing the scenario, the symmetry group and its representation. As presented in~\figurename{~\ref{fig_step}}, the construction is done piece by piece, adding successively outcomes, settings and parties, while keeping track of the decomposition of the representation into irreducibles. We start with the scenario $(1,1,2)$ with two outcomes, then add the settings to obtain the scenario $(1,2,2)$ and lastly add parties, obtaining the full $(2,2,2)$ scenario. We give here the main ingredients of the proof, insisting on its physical meaning. For more technical details and a complete inventory of the irreducible representations of wreath products groups, see~\cite{CST}.
\\
Let us first recall our theorem:

\begin{theorem-non}
In the scenario $(2,2,2)$, the representation $V$ of relabeling group $G$ given by its action on $\vec{P}$ is reducible, with orthogonal irreducible nonequivalent components:
\begin{equation}
V = V_\mathrm{NO_1} \oplus V_\mathrm{NO_2} \oplus V_\mathrm{NO_3} \oplus V_\mathrm{marg} \oplus V_\mathrm{corr} \oplus V_\mathrm{SI},
\end{equation}
where the subspaces have their bases given in \tablename~\ref{tab:components}. 
\end{theorem-non}

As proved in the following, a basis for the above subspaces is given by the vectors $\vec{Q}_{ijkl}$, defined as
\begin{equation}
\label{eq:app:Q}
Q_{ijkl}(ab|xy) = i^a j^b k^x l^y, \qquad i,j,k,l= \pm 1.
\end{equation}
Any distribution $\vec{P}$ writes:
\begin{equation}
\label{eq:app:P}
\vec{P} = \sum_{ijkl=\pm 1}  \alpha_{ijkl} \vec{Q}_{ijkl}, \qquad \alpha_{ijkl} = \frac{1}{16} \sum_{abxy} i^a j^b k^x l^y p_{ab|xy}.
\end{equation}
While the individual basis vectors $\vec{Q}_{ijkl}$ are arbitrary, the subspaces themselves are uniquely specified by the action of the relabeling group.

\subsection{First step: scenario (1,1,2)}
\begin{table}
\center
\begin{tabular}{ c | c | c }
Elements of $G_{112}$  & $Id$:\begin{tabular}{c} $0 \rightarrow 0$ \\ $1 \rightarrow 1$ \end{tabular} &
$\rho$:\begin{tabular}{c} $0 \rightarrow 1$ \\ $1 \rightarrow 0$ \end{tabular} \\
\hline 
Transformation of \\
\includegraphics{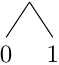}
& 
\includegraphics{ForestB11}
& 
\includegraphics{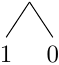}
\end{tabular}
\caption{Tree representations of the action of elements of $G_{\mathrm{112}}$.}
\label{Tree1}
\end{table}
We first start with a (1,1,2) Bell scenario, which corresponds to a physical device that outputs $\square$ or $\medcirc$, as represented in \figurename{~\ref{fig_step}~(a)}. We choose an arbitrary labeling, e.g. $\medcirc \equiv0$ and $\square \equiv 1$, to define the probability $p_{a}$ of obtaining the result 0 or 1. Then the probability vector is given by:
\begin{equation}
\vec{P}=\sum_{a}p_{a}\bm{e}_a=\left( \begin{array}{c}p_{0}\\ p_{1}\end{array} \right)\in V_{112}\equiv\mathbb{R}^2,
\end{equation} 
with $\bm{e}_0=\binom{1}{0},\bm{e}_1=\binom{0}{1}$ the canonical basis of $\mathbb{R}^2$. As represented in \tablename{~\ref{Tree1}}, the symmetry group is $G_{\mathrm{112}}\cong \mathfrak{S}_2=\{\Id, \rho\}$, where $\Id$ is the identity and $\rho$ exchanges $0$ and $1$, i.e. exchanges $e_0$ and $e_1$.

We now show how to express $\vec{P}$ in a more convenient way by giving the decomposition of $V_{112}$ into irreducible representations under $G_{112}$.
Defining $\bm{\epsilon}_\pm=\bm{e}_0\pm \bm{e}_1$, one can write $\vec{P}$ as $\vec{P}=\vec{P}_{\mathrm{NO}}+\vec{P}_{\mathrm{phy}}$ where $\vec{P}_{\mathrm{NO}}=\alpha_+ \bm{\epsilon}_+$  is the normalization part of $\vec{P}$ (with $\alpha_+=\frac{1}{2}(p_0+p_1)=\frac{1}{2}$) and $\vec{P}_{\mathrm{phy}}=\alpha_- \bm{\epsilon}_-$ is the physical part of $\vec{P}$ (with $\alpha_-=\frac{1}{2}(p_0-p_1)$). $\vec{P}_{\mathrm{phy}}$ is just the bias between the two outcomes.
The action of the elements of $G_{112}$ over $\vec{P}$ is then very easy to compute: $\Id$ does nothing, and $\rho$ changes the sign of $\vec{P}_{\mathrm{phy}}$. Then, we write $V_{\mathrm{112}}\egale{G_{112}}t\oplus s$, where $t$ (respectively $s$) is the vector space generated by $\{\epsilon_+\}$ (respectively $\{\epsilon_-\}$). The important properties of $t$ and $s$ are that they are stable under the action of elements of $G_{112}$. More precisely, $t$ (the trivial irreducible representation of $G_{112}$) is left invariant by any element of $G_{112}$ and $s$ (the sign irreducible representation of $G_{112}$) is such that $\rho$ acts as $-\Id$.

\subsection{Second step: scenario (1,2,2)}

As represented in \figurename{~\ref{fig_step}~(b)}, we extend our scenario to a (1,2,2) Bell scenario by adding two additional settings (e.g. with the arbitrary labeling $\medbullet\equiv0$ and $\blacksquare\equiv1$). When we label the experiment, we may decide on the labeling of the outcomes depending on the choice of measurement settings, which leads to eight possible labelings. After the labeling is chosen, we can define the probability $p_{a|x}$ to obtain an outcome $a$ given the setting $x$:
\begin{equation}\label{equation122}
\vec{P}=\sum_{ax}p_{a|x} ~ \bm{e}_a\otimes \bm{e}_x^* =\left( \begin{array}{c}p_{0|0^*}\\ p_{1|0^*}\\ p_{0|1^*} \\ p_{1|1^*}\end{array} \right)\in V_{\mathrm{122}}\equiv\mathbb{R}^2\otimes {\mathbb{R}^2}^*\cong\mathbb{R}^4,
\end{equation} 
where $\{ \bm{e}_a\}_{a=0,1}$ (respectively $\{ \bm{e}_x^*\}_{x=0,1}$) is the canonical basis of $\mathbb{R}^2$ (respectively ${\mathbb{R}^2}^*$). Note that we changed the usual convention for the tensor product enumeration, changing the value of subscript $a$ first. Again, we use an asterisk for the inputs, to distinguish them from the outputs.  An element of $G_{\mathrm{122}}$ is defined by one permutation of the setting $x$, and two permutations of the outcomes (one for each setting): two elements can be seen in \tablename{~\ref{Tree2}}. More formally, $G_{\mathrm{122}}=\mathfrak{S}_2 \wr \mathfrak{S}_2^*$ is the wreath product (written $\wr$) of $G_{112}\cong \mathfrak{S}_2$ (permutation of the settings) with $\mathfrak{S}_2^*$ (permutation of the outcomes) and $V_{\mathrm{122}}$ is the imprimitive representation constructed from $V_{\mathrm{112}}$ (see \ref{app:wreath}).

\begin{table}
\center
\begin{tabular}{c|cc|cc}
Some elements of $G_{122}$  & $\rho$:\begin{tabular}{c}$0^* \rightarrow 0^*$ 
 \\ $1^* \rightarrow 1^*$
 \end{tabular} &\begin{tabular}{l}$0\rightarrow 1$ 
 \\ $1\rightarrow 0$\end{tabular} &
$\rho$:\begin{tabular}{c} $0^* \rightarrow 1^*$ 
 \\ $1^* \rightarrow 0^*$\end{tabular}&
\begin{tabular}{l}
if $0^*$: \begin{tabular}{c}  \footnotesize $0 \rightarrow 1$ \\ \footnotesize $1 \rightarrow 0$ \end{tabular}\\ \hline
if $1^*$: \begin{tabular}{c} \footnotesize $0 \rightarrow 0$ \\  \footnotesize $1 \rightarrow 1$ \end{tabular} 
  \end{tabular} \\
\hline 
Transformation of \\
\includegraphics{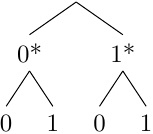}
&
\multicolumn{2}{c}{
\includegraphics{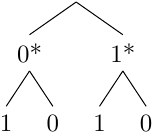}
}
&\multicolumn{2}{c}{
\includegraphics{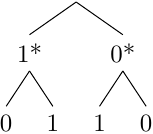}
}
\end{tabular}
\caption{Tree representation of the action of two elements of $G_{\mathrm{122}}$.}
\label{Tree2}
\end{table}

In order to decompose $V_{\mathrm{122}}$ into irreducible representations, we can first decompose it under the action of $G_{112}\subset G_{\mathrm{122}}$ only:
\begin{equation}
\mathbb{R}^2\otimes {\mathbb{R}^2}^* \egale{G_{112}}(t\oplus s)\otimes {\mathbb{R}^2}^*\egale{G_{112}}(t\otimes 0^*)\oplus(t\otimes 1^*)\oplus(s\otimes 0^*)\oplus(s\otimes 1^*),
\end{equation}
where $t\otimes x^*$ and $s\otimes x^*$ are generated by $\bm{\epsilon}_+\otimes \bm{e}^*_x$ and  $\bm{\epsilon}_-\otimes \bm{e}_x^*$, respectively. Then, to decompose under the full group $G_{\mathrm{122}}$, one has to take into account all possible permutations of the setting. Finally, we find: 
\begin{equation}
\mathbb{R}^2\otimes {\mathbb{R}^2}^* \egale{G_{122}} T \oplus S^* \oplus \phi,
\end{equation}
where $T\equiv t\otimes t^*$, $S^*\equiv t\otimes s^*$ and $\phi\equiv s\otimes 0^*+s\otimes 1^*$ are generated by $\{\bm{\epsilon}_+\otimes \bm{\epsilon}_+^*\}$, $\{\bm{\epsilon}_+\otimes \bm{\epsilon}_-^*\}$ and $\{\bm{\epsilon}_-\otimes \bm{\epsilon}_+^*$, $\bm{\epsilon}_-\otimes \bm{\epsilon}_-^*\}$, respectively. As expected, those three spaces are globally unchanged by any element $\rho\in G_{122}$.

In this decomposition, $\vec{P}$ can be written as $\vec{P}=\vec{P}_\mathrm{NO_1}+\vec{P}_\mathrm{NO_2}+\vec{P}_\mathrm{phy}$ where $\vec{P}_\mathrm{NO_1}= \alpha_{++} ~ \bm{\epsilon}_+\otimes\bm{\epsilon}_+^*\in T$ and $\vec{P}_\mathrm{NO_2}=\alpha_{+-} ~ \bm{\epsilon}_+\otimes \bm{\epsilon}_-^*\in S^*$ are the normalization parts and $\vec{P}_\mathrm{phy}= \alpha_{-+} ~ \bm{\epsilon}_-\otimes \bm{\epsilon}_+^*+\alpha_{--} ~ \bm{\epsilon}_-\otimes \bm{\epsilon}_-^*\in\phi$ is the physical part (with $\alpha_{ik}=\frac{1}{4}\sum_{ax} i^a k^x p_{a|x}$). Here $\alpha_{+-}$ represents the bias in the settings which is 0 for normalized probability vectors. In the physical part, $\alpha_{-+}$ is the sum of the outcome biases for the two possible settings while $\alpha_{--}$ is the difference of the outcome biases for the two possible settings.

\subsection{Last step: scenario (2,2,2)}
\begin{table}
\center
\begin{tabular}{c | c }
Element of $G_{222}$  &
$\rho$: 
\begin{tabular}{c}
	A $\rightarrow$ B 
	\begin{tabular}{c c } 
		\small 
		if A
		\begin{tabular}{c} 
			\footnotesize $0^* \rightarrow 0^*$ 
			 \\  \footnotesize$1^* \rightarrow 1^*$
		\end{tabular}&
			\begin{tabular}{l}
				\footnotesize if $0^*$: 
				\begin{tabular}{c}  
					\scriptsize $0 \rightarrow 1$ \\ \scriptsize $1 \rightarrow 0$ 
				\end{tabular}\\ \hline
				\footnotesize if $1^*$: 
				\begin{tabular}{c}
					\scriptsize $0 \rightarrow 1$ \\  \scriptsize $1 \rightarrow 0$ 
				\end{tabular} 
			\end{tabular} \\ \hline
		\end{tabular}\\ 
		\small B $\rightarrow$ A
		\begin{tabular}{c c }
		if B
		\begin{tabular}{c} 
			\footnotesize $0^* \rightarrow 1^*$ 
			 \\ \footnotesize $1^* \rightarrow 0^*$
		\end{tabular}&
			\begin{tabular}{l}
				\footnotesize if $0^*$: 
				\begin{tabular}{c}  
					\scriptsize $0 \rightarrow 1$ \\ \scriptsize $1 \rightarrow 0$ 
				\end{tabular}\\ \hline
				\footnotesize if $1^*$: 
				\begin{tabular}{c}
					\scriptsize $0 \rightarrow 0$ \\  \scriptsize $1 \rightarrow 1$ 
				\end{tabular} 
			\end{tabular}
		\end{tabular}
	\end{tabular}
\\
\hline 
Transformation of \\
\includegraphics{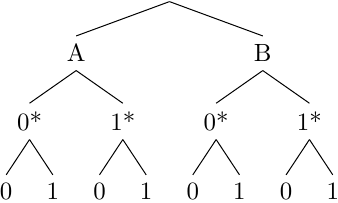}
& 
\includegraphics{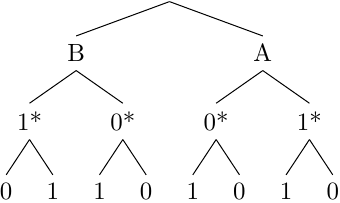}
\end{tabular}
\caption{\label{Tree3}Tree representation of the action of one elements of $G_{\mathrm{222}}$.}
\end{table}
To extend to a (2,2,2) Bell scenario, we consider two parties arbitrarily labeled as Alice and Bob, as represented in \figurename{~\ref{fig_step}~(c)}. The symmetry group $G_{\mathrm{222}}$ of this scenario is composed of 128 ways to relabel the setup: a specific element is represented in \tablename{~\ref{Tree3}}. For a specific setup, we can define the probability $p_{ab|xy}$ of obtaining the outcomes $(a,b)$ given that the settings are $(x,y)$. The associated probability vector is given by :
\begin{equation}
\vec{P}=\sum_{axby}p_{ab|xy}\bm{e}_a^A\otimes \bm{e}_x^X\otimes \bm{e}_b^B\otimes \bm{e}_y^Y =\left( \begin{array}{c}p_{00|0^*0^*}\\ p_{10|0^*0^*}\\ \vdots \\ p_{01|1^*1^*} \\ p_{11|1^*1^*}\end{array} \right)\in V_{222}\equiv(\mathbb{R}^2\otimes {\mathbb{R}^2}^*)_A \otimes (\mathbb{R}^2\otimes {\mathbb{R}^2}^*)_B\cong\mathbb{R}^{16}.
\end{equation}
(Here again, similarly to Eq.~(\ref{equation122}), we chose a subscript enumeration different from the usual one).
More formally, $G_{\mathrm{222}}$ is a wreath product: $G_{\mathrm{222}}= G_{\mathrm{122}} \wr \mathfrak{S}_2^{\mathrm{AB}}= \mathfrak{S}_2\wr \mathfrak{S}_2^{*}\wr \mathfrak{S}_2^{\mathrm{AB}}$ and $V_{\mathrm{222}}$ is the primitive representation constructed from $V_{\mathrm{122}}$. In order to decompose $V_{\mathrm{222}}$ into irreducible representations, we can first decompose independently each $(\mathbb{R}^2\otimes \mathbb{R}^*{^2})$ under $G_{\mathrm{122}}$:
\begin{eqnarray}
\nonumber V_{\mathrm{222}}
 &~=&(\mathbb{R}^2\otimes {\mathbb{R}^2}^*)_A \otimes (\mathbb{R}^2\otimes {\mathbb{R}^2}^*)_B\\
\nonumber {} &\egale{G_{\mathrm{122}}}& (T \oplus S^* \oplus \phi)_A\otimes (T \oplus S^* \oplus \phi)_B   \\
\nonumber {} &\egale{G_{\mathrm{122}}}& (T_A\otimes T_B) \oplus (T_A\otimes S_B^*) \oplus (S_A^*\otimes T_B) \oplus (S_A^*\otimes S_B^*)    \\
\nonumber {} &{}&\oplus (T_A\otimes \phi_B) \oplus (\phi_A\otimes T_B) \oplus (\phi_A\otimes \phi_B)  \oplus (S_A^*\otimes \phi_B) \oplus (\phi_A\otimes S_B^*) 
\end{eqnarray}
$G_{\mathrm{222}}$ recombines those components by allowing the exchange between A and B. Thus, we obtain:
\begin{multline}
V_{\mathrm{222}}\egale{G_{222}}(\overbrace{T_A\otimes T_B}^{V_{\mathrm{NO_1}}}) \oplus (\overbrace{T_A\otimes S_B^* +S_A^*\otimes T_B}^{V_{\mathrm{NO_2}}}) \oplus (\overbrace{S_A^*\otimes S_B^*}^{V_{\mathrm{NO_3}}}) \\
\oplus (\underbrace{T_A\otimes \phi_B}_{V_{\mathrm{marg}}^{\mathrm{B}}} + \underbrace{\phi_A\otimes T_B}_{V_{\mathrm{marg}}^{\mathrm{A}}}) \oplus (\underbrace{\phi_A\otimes \phi_B}_{V_{\mathrm{corr}}}) \oplus   (\underbrace{S_A^*\otimes \phi_B}_{V_{\mathrm{SI}}^{\mathrm{\rightarrow B}}}+ \underbrace{\phi_A\otimes S_B^*}_{V_{\mathrm{SI}}^{\mathrm{\rightarrow A}}})
\end{multline}
\figurename{~\ref{fig_permutirreps}} illustrates all the possible permutations of the six irreducible representations. Those irreducible representation are clearly nonequivalent.
In this decomposition, $\vec{P}$ can be written as:
\begin{equation}
\label{eq:decompo1}
\vec{P}=\vec{P}_\mathrm{NO1}+\vec{P}_\mathrm{NO2}+\vec{P}_\mathrm{NO3}+\vec{P}_\mathrm{marg}+\vec{P}_\mathrm{corr}+\vec{P}_\mathrm{SI}.
\end{equation}
Defining $\vec{Q}_{ijkl}= \bm{\epsilon}_{i,A}\otimes\bm{\epsilon}_{j,B}^*\otimes\bm{\epsilon}_{k,A}\otimes\bm{\epsilon}_{l,B}^*$ and $\alpha_{ijkl}=\frac{1}{16}\sum_{abxy} i^a j^b k^x l^y p_{ab|xy}$, we have:
\begin{equation}
\label{eq:decompo2}
\vec{P}=\sum_{ijkl}\alpha_{ijkl}\vec{Q}_{ijkl},
\end{equation}
The correspondence between Eq.~\eqref{eq:decompo1} and Eq.~\eqref{eq:decompo2} is direct and can be read in \tablename{~\ref{tab:components}}.

\begin{figure}
\begin{center}
\includegraphics{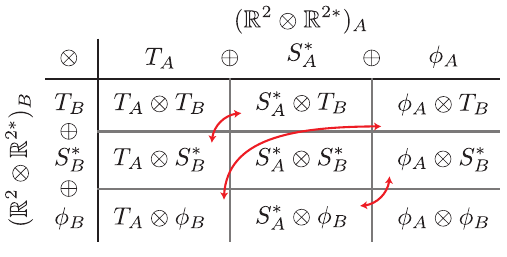} 
\caption{\label{fig_permutirreps}Decomposition into irreducible representations of the full scenario. Each cell represent an irreducible representation, under $G_{122}$. Red arrows indicates which irreducible representations are recombine under exchange of $A$ and $B$. Then, the full decomposition of the scenario under its relabeling group $G_{222}$ has six irreducible nonequivalent representations}
\end{center}
\end{figure}

\subsection{Physical interpretation of the irreducible representations} 
\begin{table}
\center
\begin{tabular}{r||c:cc:c|cc:cc|cccc|cc:cc}
 $\vec{Q} \hspace{0.4cm} i$   
       &+&+&+&+& +&+&-&-& -&-&-&-& +&+&-&-\\[-5pt]
 $j$   &+&+&+&-& -&-&+&+& -&-&-&-& -&-&+&+\\[-5pt]
 $k$   &+&+&-&+& +&+&+&-& +&-&+&-& -&-&+&-\\[-5pt]
 $l$   &+&-&+&+& +&-&+&+& +&+&-&-& +&-&-&-\\
$abxy$  &   &  &   & & & & & & & & &\\
  0000 &+&+&+& +& +&+&+&+&+&+&+&+&+&+&+&+\\
  1000 &+&+&+& +& +&+&-&-&-&-&-&-&+&+&-&-\\
  0100 &+&+&+& +& -&-&+&+&-&-&-&-&-&-&+&+\\
  1100 &+& +&+& +& -&-&-&-&+&+&+&+&-&-&-&-\\
  0010 &+&+&-&   -& +&+&+&-&+&-&+&-&-&-&+&-\\
  1010 &+&+&-&   -& +&+&-&+&-&+&-&+&-&-&-&+\\
  0110 &+& +&-&  -& -&-&+&-&-&+&-&+&+&+&+&-\\
  1110 &+& +&-&  -& -&-&-&+&+&-&+&-&+&+&-&+\\
  0001 &+&-&+&  -& +&-&+&+&+&+&-&-&+&-&-&-\\
  1001 &+&-&+&  -& +&-&-&-&-&-&+&+&+&-&+&+\\
  0101 &+&-&+&  -& -&+&+&+&-&-&+&+&-&+&-&-\\
  1101 &+& -&+& -& -&+&-&-&+&+&-&-&-&+&+&+\\
  0011 &+& -&-& +& +&-&+&-&+&-&-&+&-&+&-&+\\
  1011 &+& -&-& +& +&-&-&+&-&+&+&-&-&+&+&-\\
  0111 &+& -&-& +& -&+&+&-&-&+&+&-&+&-&-&+\\
  1111 &+& -&-& +& -&+&-&+&+&-&-&+&+&-&+&-\\

  &$\mathrm{NO}_1$&\multicolumn{2}{c:}{$\mathrm{NO}_2$}& $\mathrm{NO}_3$&   \multicolumn{2}{c:}{$\mathrm{marg^B}$}&\multicolumn{2}{c|}{$\mathrm{marg^A}$}& \multicolumn{4}{c|}{$\mathrm{corr}$}&\multicolumn{2}{c:}{$\mathrm{SI^{\rightarrow B}}$}&\multicolumn{2}{c}{$\mathrm{SI^{\rightarrow A}}$}
\end{tabular}

\caption{
\label{tab:basis} 
Coordinates of the orthogonal basis $\vec{Q}_{ijkl}$ and corresponding subspaces, as detailed in Eqs.~\eqref{eq:app:Q} and~\eqref{eq:app:P}. In this basis, $\vec{P}=\sum_{ijkl}\alpha_{ijkl}\vec{Q}_{ijkl}$ with $\alpha_{ijkl}=\frac{1}{16} \sum_{abxy} i^a j^b k^x l^y p_{ab|xy}$. Here "+" corresponds to "1" and "-" corresponds to "-1"}
\end{table}
We now interpret the subspaces of the \theoremname{} present in the decomposition~\eqref{eq:decompo1}.

\begin{itemize}

\item Normalization subspace $V_\mathrm{NO}$:

 Let $\vec{P}_{\mathrm{NO}}\in V_{\mathrm{NO}}$ be the component of $\vec{P}$ over the first three subspaces $V_{\mathrm{NO_1}}$, $V_{\mathrm{NO_2}}$ and $V_{\mathrm{NO_3}}$. As said in Section~\ref{sec:decompo}, it is easy to see that $\vec{P}_{\mathrm{NO}}$ is fixed by the four normalization conditions on $\vec{P}$, $\sum_{ab}p_{ab|xy}=1$, which impose $\alpha_{\mathrm{++++}}=\frac{1}{4}$ and $\alpha_{\mathrm{+++-}}=\alpha_{\mathrm{++-+}}=\alpha_{\mathrm{++--}}=0$. Then $\vec{P}_{\mathrm{NO}}$ characterizes the normalization of $\vec{P}_{\mathrm{NO}}$ only. 

This can also be deduced from the way elements of $G$ act over $V_\mathrm{NO}$. For $M= \mathrm{A}, \mathrm{B}$, $T_M=(t\otimes t^*)_M$ and $S^*_M=(t\otimes s^*)_M$ contain $t$, the trivial representation of the outcomes: the subspaces containing only $T_M$ and $S^*_M$ are not affected by any relabeling of the outcomes. Therefore they correspond to a property of $\vec{P}$ which can be known even by someone who has no access to the outcomes, so $\vec{P}_{\mathrm{NO}}$ can only be about the normalization of $\vec{P}$.\footnote{Those three components have an interpretation when this decomposition is directly applied to the statistics of the experiment $\vec{N}=\{N(abxy)\}$, before the normalization Eq.~\eqref{normalization}. In this case, they give information about the $N(xy)$. $T_A\otimes T_B$ is about the total number of steps $N=\sum_{x,y} N(xy)$, $S^*_A\otimes T_B$ the bias on Alice's setting $\sum_{x,y} (-1)^x N(xy)$, $T_A\otimes S^*_B$ the bias on Bob's setting $\sum_{x,y} (-1)^y N(xy)$ and $S^*_A\otimes S^*_B$ the correlations of the two biases $\sum_{x,y} (-1)^{x+y} N(xy)$.}

\item Marginal subspace $V_\mathrm{marg}$:

 $V_\mathrm{marg}^B = T_\mathrm{A}\otimes \phi_\mathrm{B}$ contains $T_\mathrm{A}=t_\mathrm{A}\otimes t_\mathrm{A}^*$. It is unchanged by any relabeling (setting or outcome) on Alice's side. Hence it corresponds to a property of $\vec{P}$ which can be known by Bob if he does not communicate with Alice and which does not depends on her setting. This can only be Bob's marginals. We call $\vec{P}_{\mathrm{marg}}^\mathrm{B}$ the projection of $\vec{P}$ over this subspace. Likewise, $V_\mathrm{marg}^A=\phi_\mathrm{A}\otimes T_\mathrm{B}$ corresponds to Alice's marginals. In terms of coefficients, we have $\vec{P}_{\mathrm{marg}}^\mathrm{A}=\alpha_{\mathrm{-+++}}\vec{Q}_{\mathrm{-+++}}+\alpha_{\mathrm{-+-+}}\vec{Q}_{\mathrm{-+-+}}$ and $\vec{P}_{\mathrm{marg}}^\mathrm{B}=\alpha_{\mathrm{+-++}}\vec{Q}_{\mathrm{+-++}}+\alpha_{\mathrm{+-+-}}\vec{Q}_{\mathrm{+-+-}}$.

\item Correlation subspace $V_\mathrm{corr}$:

 $\vec{P}_\mathrm{corr}$ contains the same information as the usual correlators. Indeed, the correlators $\big < A_x B_y \big >$ corresponds to the new orthogonal basis $\vec{E}_{xy}=\sum_{k,l}k^x l^y\vec{Q}_{1k1l}$. In this basis, we have:
\begin{equation}
\vec{P}_{\mathrm{corr}}=\sum_{xy} E_{xy} ~ \vec{E}_{xy},
\end{equation}
where $E_{xy}=\big < A_x B_y \big >$.

Looking at the way elements of $G$ act over $V_\mathrm{corr}$, we can show that it must correspond to correlations. Let us consider the relabeling of outputs $\rho\in G_{\mathrm{122}}$ of only one party (which exchange $0$ and $1$). This defines two relabelings of the $(2,2,2)$ scenario, $\rho_\mathrm{A}$ (respectively $\rho_\mathrm{B}$), which acts on Alice's (respectively Bob's) side only. Hence, $\phi_\mathrm{A}\otimes \phi_\mathrm{B}$ has the property of transforming the same way for $\rho_\mathrm{A}$ and $\rho_\mathrm{B}$: if $\vec{P}_{\mathrm{corr}}$ is the component of $\vec{P}$ over $V_{\mathrm{corr}}$, we have $\vec{P}_{\mathrm{corr}}^{\rho_\mathrm{A}}=\vec{P}_{\mathrm{corr}}^{\rho_\mathrm{B}}$. Now, as $\rho=\rho^{-1}$ (with two settings/outcomes), $(\vec{P}_{\mathrm{corr}}^{\rho_\mathrm{A}})^{\rho_\mathrm{B}}=\vec{P}_{\mathrm{corr}}$. Then $\vec{P}_{\mathrm{corr}}$ corresponds to a property of $\vec{P}$ where the relationship between Alice's and Bob's outcome is important, but the result of Alice by itself is not; this is the correlation part. 

\item Signaling subspace $V_\mathrm{SI}$:

$V_{\mathrm{SI}}^{ \rightarrow \mathrm{B}}=S^*_\mathrm{A}\otimes \phi_\mathrm{B}$ contains $S^*_\mathrm{A}=t\otimes s^*$, on which a permutation $\rho_\mathrm{A}^*$ of the setting on Alice's side acts as $-Id$. Therefore, if $\vec{P}_{SI}^\mathrm{B}$ is the component of $\vec{P}$ over this subspace, ${(\vec{P}_{SI}^\mathrm{B})}^{\rho_\mathrm{A}^*}=-\vec{P}_{SI}^\mathrm{B}$. Hence Alice's choice of settings has an influence on $\vec{P}_{SI}^\mathrm{B}$: as soon as $\vec{P}_{SI}^\mathrm{B}\neq 0$, Alice can signal to Bob. So this subspace must correspond to the signaling from Alice to Bob. Likewise, $V_{\mathrm{SI}}^{ \rightarrow \mathrm{A}}=\phi_\mathrm{A}\otimes S^*_B$ corresponds to the signaling from Bob to Alice. We have $\vec{P}_{\mathrm{SI}}^\mathrm{A}=\alpha_{\mathrm{-++-}}\vec{Q}_{\mathrm{-++-}}+\alpha_{\mathrm{-+--}}\vec{Q}_{\mathrm{-+--}}$ and $\vec{P}_{\mathrm{SI}}^\mathrm{B}=\alpha_{\mathrm{+--+}}\vec{Q}_{\mathrm{+--+}}+\alpha_{\mathrm{+---}}\vec{Q}_{\mathrm{+---}}$.

\end{itemize}

\section{Optimal form of a Bell inequality}

\subsection{Derivation}
\label{app:Derivation}

Here, we prove formula Eq.~\eqref{eq:optimalvariant}.
Remember that $\Pi$ is the projector over the signaling subspace, directly given by $\Pi=\frac{1}{16}\sum_{ijkl\in \mathrm{SI}} \vec{Q}_{ijkl}\vec{Q}_{ijkl}^\top$, where the $\vec{Q}_{ijkl}$ are given in Table~\ref{tab:basis}. We saw in Section~\ref{sec:spusignfinstat} that the mean of $I^{\,\rm run} = \vec{\beta} \cdot \vec{P}^{\,\rm run}$ is given by $\left < I^{\,\rm run} \right > = I^{\,\rm setup}$. We continue here with the computation of the variance $\sigma$:

\begin{equation}
\label{eq:derivvar}
\sigma^2 = \vec{\beta}^{\top} \, \Sigma \, \vec{\beta} =  \vec{\beta}_\mathrm{NOS}^{\top} \Sigma \, \vec{\beta}_\mathrm{NOS} + \vec{\beta}^{\top}_\mathrm{SI} \Sigma \, \vec{\beta}_\mathrm{SI} + 2\, \vec{\beta}^\top_\mathrm{NOS} \Sigma\, \vec{\beta}_\mathrm{SI},
\end{equation}
where $\vec{\beta}_\mathrm{NOS} = \vec{\beta}_\mathrm{NO} + \vec{\beta}_\mathrm{NS}$. With $\Pi$ the orthogonal projector over $V_\mathrm{SI}$ and $\bar{\Pi}=\id-\Pi$, we have $\vec{\beta}_\mathrm{SI}=\Pi \vec{\beta}$ and $\vec{\beta}_\mathrm{NOS}=\bar{\Pi} \vec{\beta}$ \\

Now, we search the Bell inequality variant with the minimal variance $\sigma^2$. This corresponds to finding the minimum $\vec{\beta}_\mathrm{SI}^*$ of the following function $f$ over the space $V_\mathrm{SI}$:
\begin{equation}
f(\vec{\beta}_\mathrm{SI}) = \sigma^2 = \vec{\beta}_\mathrm{NOS}^{\top} \Sigma \vec{\beta}_\mathrm{NOS} + \vec{\beta}^{\top}_\mathrm{SI} \Sigma \vec{\beta}_\mathrm{SI} + 2 \vec{\beta}^\top_\mathrm{NOS} \Sigma \vec{\beta}_\mathrm{SI}.
\end{equation}
As the covariance matrix $\Sigma$ is positive semidefinite by construction, the minimum is obtained by $\nabla_{\vec{\beta}^*_\mathrm{SI}} f$ null over $V_\mathrm{SI}$ i.e. $\Pi\nabla_{\vec{\beta}^*_\mathrm{SI}} f=0$. We obtain $\Pi\,\Sigma\,\vec{\beta}^*_\mathrm{SI}+\Pi\,\Sigma\,\bar{\Pi}\vec{\beta}=0$. Then:
\begin{equation}\label{eq:optimalchi}
\vec{\beta}^*_\mathrm{SI}=\left ( -\Pi\left(\Pi\,\Sigma\,\Pi+\bar{\Pi}\right)^{-1}\Pi\,\Sigma\,\bar{\Pi} \ \right ) \vec{\beta} ,
\end{equation}
from which the optimal variant of the inequality can be reconstructed: $\vec{\beta}^* = \vec{\beta}_\mathrm{NOS} + \vec{\beta}_\mathrm{SI}^*$, which leads to Eq.~\eqref{eq:optimalvariant}. In all our tests, the matrix $(\Pi\,\Sigma\,\Pi+\bar{\Pi})$ was invertible; when this is not the case, there are subspaces in $V_\mathrm{SI}$ that do not contribute to the variance, and $(\Pi\,\Sigma\,\Pi+\bar{\Pi})^{-1}$ can replaced by the pseudoinverse.

\subsection{Optimality of CHSH for symmetric setups}

\label{app:symarg}
We prove here that in the case of a (2,2,2) setup physicaly symmetric in the two outputs  (such as  the setup used in \cite{Hensen2015} in its noiseless idealization), we have,  for any Bell inequality $\vec{\beta}$ equivalent to CHSH:
\begin{equation}
\vec{\beta}^\top_\mathrm{NOS} \Sigma \vec{\beta}_\mathrm{SI}=0.
\end{equation}
Then, the optimal solution to minimize $\vec{\beta}^{\top}_\mathrm{SI} \Sigma \vec{\beta}_\mathrm{SI}$ is to set $\vec{\beta}^{\top}_\mathrm{SI} = 0$, ie CHSH is the optimal inequality.\\
\\
Let $\rho$ be the symmetry which exchanges $0$ and $1$ on Alice's and Bob's sides. As usual, it can be seen as a relabeling of the outputs, ie a linear map $\vec{\beta}_\mathrm{NOS} \stackrel{\rho}{\mapsto}\vec{\beta}^{\rho}_\mathrm{NOS}$, $\vec{\beta}_\mathrm{SI} \stackrel{\rho}{\mapsto}\vec{\beta}^{\rho}_\mathrm{SI}$. It can also be seen as a transformation of the experimental setup itself: This time, we apply the transformation over the physical setup, transforming the physical state and the measurement operators. This transforms the covarient matrix $\Sigma \stackrel{\rho}{\mapsto}\Sigma^\rho$. Thus, we have $\vec{\beta}^{\rho\top}_\mathrm{NOS} \Sigma \vec{\beta}^\rho_\mathrm{SI}=\vec{\beta}^{\top}_\mathrm{NOS} \Sigma^\rho \vec{\beta}_\mathrm{SI}$.
Moreover, we can show the following:\\
\begin{itemize}
\item[\textit{(i)}] For any Bell inequality $\vec{\beta}$ equivalent to CHSH (i.e. equal over the nosignaling subspace), we have $\vec{\beta}^{\rho}_\mathrm{NOS}=\vec{\beta}_\mathrm{NOS}$ and $\vec{\beta}^{\rho}_\mathrm{SI}=-\vec{\beta}_\mathrm{SI}$.
\begin{proof}
We consider the decomposition given in the \theoremname{} to decompose $\vec{\beta}_\mathrm{NOS}=\vec{\beta}_\mathrm{NO}+\vec{\beta}_\mathrm{marg}+\vec{\beta}_\mathrm{corr}$. As $\vec{\beta}$ is equivalent to $\vec{\beta}^{\mathrm{CHSH}}$, we have $\vec{\beta}_\mathrm{marg}=\vec{\beta}_\mathrm{marg}^{\mathrm{CHSH}}=0$. \sectionname{\ref{sec:decompo}} gives that for any party $M$, $\rho$ is $Id$ over $T_M$, $S^*_M$ and $-Id$ over $\phi_M$. Then, considering Figure~\ref{fig_permutirreps}, we see that $\vec{\beta}_\mathrm{NOS}^\rho=\vec{\beta}_\mathrm{NO}^\rho+\vec{\beta}_\mathrm{corr}^\rho=\vec{\beta}_\mathrm{NO}+\vec{\beta}_\mathrm{corr}=\vec{\beta}_\mathrm{NOS}$ and $\vec{\beta}_\mathrm{SI}^\rho=-\vec{\beta}_\mathrm{SI}$.
\end{proof}

\item[\textit{(ii)}] For any Bell setup symmetric in the two outputs, we have $\Sigma^\rho=\Sigma$.
\begin{proof}
No matter what physical formalism is used to describe the Bell setup (classical, quantum, etc.), the transformation $\rho$ must have a meaning over the physical setup\footnote{For a usual quantum Bell test, it corresponds to a transformation of the setting state and a transformation of the measurement operators. In this case, the probabilities are given by $p_{ab|xy} = {\rm Tr}\left[ (\Pi^A_a \otimes \Pi^B_b) R(\theta^A_x,\theta^B_y) \rho  R(\theta^A_x,\theta^B_y)^\dagger  \right]$ and $\rho$ is a transformation of $\rho$ and the measurement operators $\Pi^M_m$ (with $M=A,B$ and $m=a,b$). For the idealized \cite{Hensen2015}, this expression is unchanged by application of $\rho$.}. The first setup is associated to a covarient matrix $\Sigma$, and the transformed one to $\Sigma^\rho$. If this transformation lets the setup invariant (which is the case in the idealized noise-less setup \cite{Hensen2015}), we must have $\Sigma=\Sigma^\rho$.
\end{proof}
\end{itemize}

\section{Wreath product groups}
\label{app:wreath}

Any given setup of a Bell scenario is represented by a vector $\vec{P}\in V$, which is a representation of a finite group $G$ associated to the scenario. As we will see in the next section, this group is build from an operation over groups, the wreath product $\wr$. Indeed, $G$ is the wreath product of the three groups of permutations of the outcomes, settings and parties: $G=\mathfrak{S}_\mathrm{outcomes}\wr \mathfrak{S}_\mathrm{settings}\wr \mathfrak{S}_\mathrm{parties}$. The vector space $V$ is a representation of this group. More precisely, it is "\textit{the $\mathfrak{S}_{parties}$-primitive representation of the $\mathfrak{S}_{settings}$-imprimitive representation of the $\mathfrak{S}_{outcomes}$-natural representation}".
\\
Here, we will briefly introduce the notions of wreath product and primitive/imprimitive representations in the case of permutation groups. Those structures are used in~\ref{app:step}.

If $\mathfrak{S}_p$ and $\mathfrak{S}^{*}_q$ are two permutation groups, their wreath product is a finite group $G= \mathfrak{S}_p\wr \mathfrak{S}^{*}_q$. Any element $g\in G$ is defined by:
\begin{itemize}
\item an element $\rho\in \mathfrak{S}^{*}_q$
\item for each $0\leq k^*\leq q-1$, an element $\sigma_{k^{*}}\in \mathfrak{S}_p$
\end{itemize}
From now, we will focus on the case $q=3$ and $p=2$ (generalization are straightforward).  Then, $g\equiv (\rho,(\sigma_{0^*},\sigma_{1^*},\sigma_{2^*}))$. 
\begin{table}
\center
\begin{tabular}{l | c | c | c c}
$g_0\in \mathfrak{S}_2\wr \mathfrak{S}^{*}_3$ & 
$\rho$:\begin{tabular}{c} $0^* \rightarrow 1^*$ 
 \\ $1^* \rightarrow 0^*$\\$2^* \rightarrow 2^*$\end{tabular}&
\begin{tabular}{l}
$\sigma_{0^*}$: \begin{tabular}{c}  \footnotesize $0 \rightarrow 1$ \\ \footnotesize $1 \rightarrow 0$ \end{tabular}\\ \hline
$\sigma_{1^*}$: \begin{tabular}{c}  \footnotesize $0 \rightarrow 0$ \\ \footnotesize $1 \rightarrow 1$ \end{tabular}\\ \hline
$\sigma_{2^*}$: \begin{tabular}{c} \footnotesize $0 \rightarrow 1$ \\  \footnotesize $1 \rightarrow 0$ \end{tabular} 
  \end{tabular} \\
\hline 
Transformation of \\ 
\includegraphics{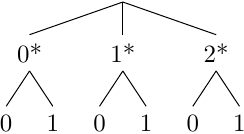}
  &\multicolumn{2}{c}{
\includegraphics{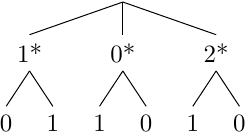}
}
\end{tabular}
\caption{Example of tree representation of an element $g_0$ of $\mathfrak{S}_2\wr \mathfrak{S}^{*}_3$.}
\label{tab:wreath}
\end{table}
The internal product in $G$ may be expressed formally. However, here we introduce it by its most convenient interpretation, where elements of $G$ are seen as transformations of a tree.
We consider a two level tree, the first level having three nodes (labeled $0^*$, $1^*$, $2^*$) and each of these nodes having two leaves  (labeled $0$, $1$).
Then, any $g\in G$ transforms the tree as follows:
\begin{itemize}
\item it permutes the first level with $\rho^*$.
\item for each node number $k^*\in\{0^*,1^*,2^*\}$, it permutes its leaves according to $\sigma_{k^*}$.
\end{itemize}
Then, the product of $g,g'\in G$ is $g''$, the transformation which result in the same tree than the one obtained after successive map of the tree by $g'$ and $g$ and the inverse can be defined in the same way. This gives a group structure.
A simple example $g_0$, with its tree representation, is given in \tablename{~\ref{tab:wreath}}.

\subsection{Primitive and imprimitive representations}

There are two natural ways to create representations of wreath product groups, the primitive and the imprimitive representations. Again, the most convenient is to use the tree representation.

\begin{itemize}

\item For the imprimitive representation, we consider that each path from the root of the tree to a leaf is associated to some vector. Such a path can be labeled with two numbers $j^*$ and $i$, and is associated to a vector $e_{i}\otimes e_{j^*}$. The basis vectors span $V_\mathrm{prim}\cong\mathbb{R}^2\otimes{\mathbb{R}^*}^3$. Then, we apply $g$ to the tree and obtain a new vector. This gives the action of $g$ over any basis vector of $V_\mathrm{prim}$. One can easily find that the basis vector $~e_{1}\otimes e_{0^*}$ is mapped to $~e_{0}\otimes e_{1^*}$ by $g_0$.

\begin{center}
\begin{tabular}{ccc}
\begin{minipage}[c]{4.5cm}
\center
\includegraphics{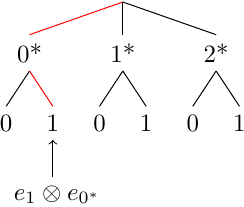}
\end{minipage}
&{\centering $\xrightarrow[]{g}$} &
\begin{minipage}[c]{4.5cm}
\center
\includegraphics{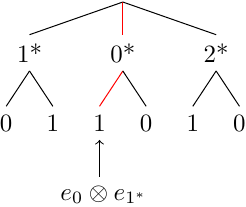}
\end{minipage}
\end{tabular}
\end{center}

Mathematically, an element $(\rho,(\sigma_{0^*},\sigma_{1^*},\sigma_{2^*}))\in \mathfrak{S}_2\wr \mathfrak{S}^{*}_3$ acts on the basis as $e_{i}\otimes e_{j^*}\rightarrow e_{\sigma_{j^*}(i)}\otimes e_{\rho(j^*)}$.

\item For the primitive representation, we consider the eight possible choices of the position of one leaf in each of the three groups of two leaves. Then, such a choice can be labeled with three numbers $i_{0^*}, i_{1^*}, i_{2^*}$, and is associated to a basis vector $e_{i_{0^*}}\otimes e_{i_{1^*}}\otimes e_{i_{2^*}}$. The basis vectors span $V_\mathrm{imprim}\cong\mathbb{R}^2\otimes\mathbb{R}^2\otimes\mathbb{R}^2$. Then, we apply $g$ to the tree and obtain a new vector. This gives the action of g over any basis vector of $V_\mathrm{imprim}$. One can easily find that the basis vector $e_{1}\otimes e_{0}\otimes e_{0}$ is mapped to $e_{0}\otimes e_{0}\otimes e_{1}$ by $g_0$.
\begin{center}
\begin{tabular}{ccc}
\begin{minipage}[c]{5cm}
\center
\includegraphics{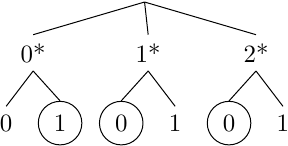}
$e_{1}\otimes e_{0}\otimes e_{0}$
\end{minipage}
&{\centering $\xrightarrow[]{g}$} &
\begin{minipage}[c]{5cm}
\center
\includegraphics{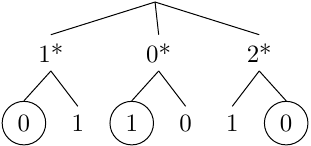}
$e_{0}\otimes e_{0}\otimes e_{1}$
\end{minipage}
\end{tabular}
\end{center}

Mathematically, an element $(\rho,(\sigma_{0^*},\sigma_{1^*},\sigma_{2^*}))\in \mathfrak{S}_2\wr \mathfrak{S}^{*}_3$ acts on the basis as $e_{i_{0^*}}\otimes e_{i_{1^*}}\otimes e_{i_{2^*}}\rightarrow e_{\sigma_{\rho^{-1}(0^*)}(i_{\rho^{-1}(0^*)})}\otimes e_{\sigma_{\rho^{-1}(1^*)}(i_{\rho^{-1}(1^*)})}\otimes e_{\sigma_{\rho^{-1}(2^*)}(i_{\rho^{-1}(2^*)})}$.
\end{itemize}

These two natural representations are not irreducible.

\section*{References}
\bibliographystyle{iopart-num}
\bibliography{biblio}

\end{document}